\newtheorem{theorem}{Theorem}
\newtheorem{lemma}{Lemma}
\newtheorem{corollary}{Corollary}
\newtheorem{definition}{Definition}
\newtheorem{remark}{Remark}
\newtheorem{problem}{Problem}
\DeclarePairedDelimiter\rbra{\lparen}{\rparen}
\DeclarePairedDelimiter\sbra{\lbrack}{\rbrack}
\DeclarePairedDelimiter\cbra{\{}{\}}
\DeclarePairedDelimiter\abs{\lvert}{\rvert}
\DeclarePairedDelimiter\Abs{\lVert}{\rVert}
\DeclarePairedDelimiter\ceil{\lceil}{\rceil}
\DeclarePairedDelimiter\floor{\lfloor}{\rfloor}
\DeclarePairedDelimiter\ket{\lvert}{\rangle}
\DeclarePairedDelimiter\bra{\langle}{\rvert}
\DeclareMathOperator*{\E}{\mathbb{E}}
\newcommand{\poly} {\operatorname{poly}}
\newcommand{\polylog} {\operatorname{polylog}}
\newcommand{\footremember}[2]{%
    \footnote{#2}
    \newcounter{#1}
    \setcounter{#1}{\value{footnote}}%
}
\begin{document}
    \title{Tight Quantum Depth Lower Bound for\\ Solving Systems of Linear Equations}
\author{
    Qisheng Wang \footremember{1}{Qisheng Wang is with the Graduate School of Mathematics, Nagoya University, Nagoya, Japan (e-mail: \url{QishengWang1994@gmail.com}).}
    \and Zhicheng Zhang \footremember{2}{Zhicheng Zhang is with the Centre for Quantum Software and Information, University of Technology Sydney, Sydney, Australia (e-mail: \url{iszczhang@gmail.com}).}
}
        \date{}
        \maketitle

    \begin{abstract}
        Since \hyperlink{cite.HHL09}{Harrow, Hassidim, and Lloyd (2009)} showed that a system of linear equations with $N$ variables and condition number $\kappa$ can be solved on a quantum computer in $\poly\rbra{\log\rbra{N}, \kappa}$ time, exponentially faster than any classical algorithms, its improvements and applications have been extensively investigated.
        The state-of-the-art quantum algorithm for this problem is due to \hyperlink{cite.CAS+22}{Costa, An, Sanders, Su, Babbush, and Berry (2022)}, with optimal query complexity $\Theta\rbra{\kappa}$.
        An important question left is 
        whether parallelism can bring further optimization. 
        In this paper, we study the limitation of parallel quantum computing on this problem. 
        We show that any quantum algorithm for solving systems of linear equations with time complexity $\poly\rbra{\log\rbra{N}, \kappa}$ has 
        a lower bound of $\Omega\rbra{\kappa}$ on the depth of queries, which is tight up to a constant factor.
    \end{abstract}

    \textbf{Keywords: quantum computing, lower bounds, solving systems of linear equations, quantum query complexity, depth complexity, parallel computing.}

    \newpage
    \tableofcontents
    \newpage

    \section{Introduction}

    \paragraph*{Quantum linear systems problem.}
    Since the discovery of the celebrated quantum algorithm for solving systems of linear equations by \cite{HHL09}, it has been applied in various fields, e.g., machine learning \cite{BWP+17}, quantum chemistry \cite{CRO+19}, and finance \cite{OML19}. 

    The Quantum Linear Systems Problem (QLSP) is to prepare an $\varepsilon$-approximation to the quantum state $\ket{x}$ that is proportional to $A^{-1}\ket{b}$, given access to matrix $A$ and vector $\ket{b}$. 
    Since hard instances for QLSP are known by taking $\ket{b} = \ket{0}$ in \cite{HHL09}, we formally state (the special form of) QLSP as follows for simplicity. 
    \begin{problem} [QLSP]
        Suppose that $A \in \mathbb{C}^{N \times N}$ is an Hermitian matrix with known condition number $\kappa > 0$ such that $I/\kappa \leq A \leq I$. Let
        \[
        \ket{x} = \frac{A^{-1}\ket{0}}{\Abs{A^{-1}\ket{0}}}.
        \]
        Given quantum query access to $A$, the goal is to prepare a quantum state $\ket{\tilde x}$ such that $\Abs{\ket{\tilde x} - \ket{x}} \leq \varepsilon$ with probability at least $2/3$.
        We use $\textup{QLSP}\rbra{N, \kappa, \varepsilon}$ to denote the problem with the chosen parameters.
    \end{problem}
    Here, two types of quantum query access to a matrix are often considered in the literature: quantum query access to sparse matrices and to block-encoded matrices. 
    The former assumes a quantum oracle $\mathcal{O}_A$ that computes each entry of an $O\rbra{1}$-sparse matrix $A$ (given the  row and column indices) and a quantum oracle $\mathcal{O}_s$ that computes the index of each non-zero entry in each row (given the row index);
    the latter assumes a quantum oracle $U_A$ that is a block-encoding of a matrix $A$ (not necessarily sparse), i.e., roughly speaking, $A$ is encoded in the upper left corner of the unitary operator $U_A$. 
    Let $Q^{\textup{sparse}}\rbra{N, \kappa, \varepsilon}$ and $Q^{\textup{block}}\rbra{N, \kappa, \varepsilon}$ denote the quantum query complexity for $\textup{QLSP}\rbra{N, \kappa, \varepsilon}$ with access to sparse and block-encoded matrices, respectively. 
    
    Since quantum access to sparse matrices can be converted to quantum access to block-encoded matrices as noted in \cite{GSLW19}, it naturally holds that $Q^{\textup{sparse}}\rbra{N, \kappa, \varepsilon} = O\rbra{Q^{\textup{block}}\rbra{N, \kappa, \varepsilon}}$.
    For simplicity, here we only consider the quantum query complexity for QLSP in terms of sparse matrices.
    The first quantum algorithm for QLSP proposed in \cite{HHL09} is based on quantum phase estimation \cite{LP96,CEMM98,BDM99}, resulting in a query complexity of $\widetilde O\rbra{\kappa^2/\varepsilon}$ and a time complexity of $\poly\rbra{\log\rbra{N}, \kappa, 1/\varepsilon}$, where $\widetilde O\rbra{\cdot}$ suppresses logarithmic factors; they also gave an $\Omega\rbra{\kappa^{1-\delta} \polylog\rbra{N}}$ quantum time lower bound for QLSP.
    Shortly after, the query upper bound was improved to $\widetilde O\rbra{\kappa/\varepsilon^3}$ in \cite{Amb12} by variable-time amplitude amplification, with an almost optimal dependence on $\kappa$.
    An exponential improvement over the dependence on $\varepsilon$ was obtained in \cite{CKS17} via the Linear-Combinations-of-Unitaries (LCU) technique \cite{CW12,BCC+15} combined with quantum walks for Hamiltonians, achieving a query complexity of $O\rbra{\kappa \polylog\rbra{\kappa/\varepsilon}}$.
    Subsequent works then focused on optimizing the logarithmic factors in the complexity. 
    The query complexity for QLSP was improved to $O\rbra{\kappa \log \rbra{\kappa}/\varepsilon}$ in \cite{SSO19} based on the adiabatic randomization method; later, it was improved to $O\rbra{\kappa \log^2\rbra{\kappa} \log^4\rbra{{\log \rbra{\kappa}}/{\varepsilon}}}$ in \cite{AL22} based on the time-optimal adiabatic method, and to $O\rbra{ \kappa \rbra{ {\log\rbra{\kappa}}/ {\log\rbra{\log\rbra{\kappa}}} + \log\rbra{1/ \varepsilon} } }$ in \cite{LT20} based on Zeno eigenstate filtering.
    In \cite{OD21}, it was shown that $\Omega\rbra{\kappa}$ queries are required to solve QLSP even if the matrix is positive definite; they also identified a class of positive definite matrices for which efficient quantum algorithms with query complexity $\widetilde O\rbra{\sqrt{\kappa}}$ exist.
    Recently, logarithmic factors of $\kappa$ was finally removed in \cite{CAS+22}, resulting in a query complexity of $O\rbra{\kappa \log\rbra{1/\varepsilon}}$, which is optimal according to the lower bound $\Omega\rbra{\kappa \log\rbra{1/\varepsilon}}$ claimed in the forthcoming work \cite{HK21}.

    \begin{theorem} [Optimal QLSP solver, {\cite{CAS+22,HK21}}]
    \label{thm:qlsp}
        $Q^{\textup{sparse}}\rbra{N, \kappa, \varepsilon} = \Theta\rbra{\kappa \log\rbra{1/\varepsilon}}$. 
    \end{theorem}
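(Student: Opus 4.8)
The claim combines an $O\rbra{\kappa\log\rbra{1/\varepsilon}}$ upper bound with a matching $\Omega\rbra{\kappa\log\rbra{1/\varepsilon}}$ lower bound on $Q^{\textup{sparse}}\rbra{N,\kappa,\varepsilon}$. For the upper bound the plan is: convert the sparse oracles into a block-encoding of $A$ at $O\rbra{1}$ queries per use \cite{GSLW19}; set up an adiabatic path of Hamiltonians $\cbra{H\rbra{s}}_{s\in\sbra{0,1}}$ whose (flagged) ground state interpolates from $\ket{0}$ to $\ket{x}$, chosen --- following the randomization and Zeno constructions of \cite{SSO19,AL22,LT20} --- so that the relevant spectral gap is $\Omega\rbra{1/\kappa}$ uniformly and the path has bounded length; discretize the evolution into a product of qubitized quantum-walk steps and invoke the discrete adiabatic theorem of \cite{CAS+22} to obtain, after $O\rbra{\kappa}$ steps (hence $O\rbra{\kappa}$ queries), a state with $\Omega\rbra{1}$ overlap with $\ket{x}$; and finally boost the precision by eigenstate filtering in the style of \cite{LT20}, using the gap $\Theta\rbra{1/\kappa}$ and a Chebyshev-based polynomial of degree $O\rbra{\kappa\log\rbra{1/\varepsilon}}$ approximating the spectral projector onto the ground state, implemented by quantum singular value transformation with $O\rbra{\kappa\log\rbra{1/\varepsilon}}$ queries, followed by a constant number of fixed-point amplitude-amplification rounds to flag success with probability $\geq 2/3$. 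The filtering step dominates, giving $O\rbra{\kappa\log\rbra{1/\varepsilon}}$ queries overall, which is essentially bookkeeping once the discrete adiabatic theorem of \cite{CAS+22} is available.

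For the lower bound $\Omega\rbra{\kappa\log\rbra{1/\varepsilon}}$, I would combine two ingredients. The $\Omega\rbra{\kappa}$ factor is already available: hard instances exist even for positive-definite $A$ with $\ket{b}=\ket{0}$ \cite{HHL09,OD21}. For the $\log\rbra{1/\varepsilon}$ factor, I would exploit the polynomial structure of quantum query algorithms: by the theory of quantum singular value transformation, $T$ queries to a block-encoding of $A$ implement on the input a degree-$O\rbra{T}$ polynomial transform of the singular values of $A$, so on a one-parameter Hermitian family $A=A\rbra{t}$ with spectrum $\cbra{t,1}$ and $\ket{0}$ not aligned with any eigenvector, producing a state $\varepsilon$-close to $\ket{x}=\ket{x\rbra{t}}\propto A\rbra{t}^{-1}\ket{0}$ for all $t\in\sbra{1/\kappa,1}$ forces a degree-$O\rbra{T}$ univariate polynomial to $\varepsilon$-approximate a fixed function with a pole at the origin on $\sbra{1/\kappa,1}$; a Bernstein-ellipse estimate then yields a degree lower bound of $\Omega\rbra{\log\rbra{1/\varepsilon}}$ for fixed $\kappa$. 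It then remains to make the two factors multiply rather than merely add.

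The main obstacle is exactly this last point. The bare polynomial method, through the family above, yields only $\Omega\rbra{\sqrt{\kappa}\log\rbra{1/\varepsilon}}$, since $1/t$ on $\sbra{1/\kappa,1}$ is $\varepsilon$-approximable by a polynomial of degree $\Theta\rbra{\sqrt{\kappa}\log\rbra{1/\varepsilon}}$, and \cite{OD21} shows this ``$\sqrt{\kappa}$'' speedup is genuine for some (though not worst-case) matrices. Recovering the tight $\Omega\rbra{\kappa\log\rbra{1/\varepsilon}}$ therefore demands a hardness-amplification or hybrid argument --- for instance embedding a $\kappa$-hard instance and an $\varepsilon$-hard instance into a single block-diagonal matrix and arguing that no algorithm can amortize across the blocks, or a direct adversary argument tailored to the sparse-access model --- which is where the lower bound of \cite{HK21} does its real work.
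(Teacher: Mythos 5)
First, note that the paper does not prove \cref{thm:qlsp} at all: it is imported wholesale, with the upper bound attributed to \cite{CAS+22} and the lower bound to the forthcoming work \cite{HK21}, so there is no in-paper argument to compare your proposal against. Judged on its own terms, your upper-bound sketch is a faithful reconstruction of what \cite{CAS+22} actually does --- discrete adiabatic evolution by qubitized walk steps along a schedule with spectral gap $\Omega\rbra{1/\kappa}$, giving $\Omega\rbra{1}$ overlap with $\ket{x}$ after $O\rbra{\kappa}$ queries, followed by eigenstate filtering in the style of \cite{LT20} at degree $O\rbra{\kappa\log\rbra{1/\varepsilon}}$ --- and nothing in that half would fail.

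The lower bound is where the genuine gap sits, and you have correctly located it but not closed it. Your polynomial-method route through a commuting one-parameter family really does top out at $\Omega\rbra{\sqrt{\kappa}\log\rbra{1/\varepsilon}}$, precisely because $1/x$ admits degree-$\Theta\rbra{\sqrt{\kappa}\log\rbra{1/\varepsilon}}$ approximants on $\sbra{1/\kappa,1}$ and because \cite{OD21} exhibits positive-definite instances that saturate that degree; so the multiplicative $\Omega\rbra{\kappa\log\rbra{1/\varepsilon}}$ cannot come out of that family, and your closing sentence amounts to deferring the key step to \cite{HK21}. That deferral is honest, and it mirrors the paper's own stance --- the authors consistently describe this lower bound as ``claimed in the forthcoming work \cite{HK21}'' rather than proved --- but it means your proposal establishes only the upper bound $O\rbra{\kappa\log\rbra{1/\varepsilon}}$ together with the additive lower bounds $\Omega\rbra{\kappa}$ and $\Omega\rbra{\log\rbra{1/\varepsilon}}$, not the stated $\Theta$. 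A self-contained proof would require exactly the hardness-amplification or direct-sum argument you gesture at in the final paragraph, and neither your sketch nor this paper supplies it.
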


    We summarize the developments of QLSP in \cref{tab:qlsp}.

    \begin{table}[!htp]
    \centering
    \caption{Developments on the quantum query complexity for QLSP.}
    \label{tab:qlsp}
    \begin{tabular}{ccc}
    \toprule
    Year & $Q^{\textup{sparse}}\rbra{N, \kappa, \varepsilon}$ & Reference \\ \midrule
    2008 & $\widetilde O\rbra{\kappa^2/\varepsilon}$, time $\geq \Omega\rbra{\kappa^{1-\delta}\polylog\rbra{N}}$ & \cite{HHL09}     \\ 
    2010 & $\widetilde O\rbra{\kappa/\varepsilon^3}$ & \cite{Amb12}     \\ 
    2015 & $O\rbra{\kappa \polylog\rbra{\kappa/\varepsilon}}$ & \cite{CKS17}     \\ 
    2018 & $O\rbra{\kappa \log \rbra{\kappa}/\varepsilon}$ & \cite{SSO19}     \\ 
    2019 & $O\rbra{\kappa \log^2\rbra{\kappa} \log^4\rbra{{\log \rbra{\kappa}}/{\varepsilon}}}$ & \cite{AL22}      \\ 
    2019 & $O\rbra{ \kappa \rbra{ {\log\rbra{\kappa}}/ {\log\rbra{\log\rbra{\kappa}}} + \log\rbra{1/ \varepsilon} } }$ & \cite{LT20}      \\ 
    2021 & $\Omega\rbra{\kappa}$ & \cite{OD21} \\
    2021 & $O\rbra{\kappa \log\rbra{1/\varepsilon}}$ & \cite{CAS+22}    \\ 
    2021 & $\Omega\rbra{\kappa \log\rbra{1/\varepsilon}}$ & \cite{HK21} \\
    2024 & $\geq Q_{\parallel}^{\textup{sparse}}\rbra{N, \kappa, \varepsilon} \geq 0.249 \kappa$ & This work \\
    \bottomrule
    \end{tabular}
    \end{table}

    \paragraph*{Parallel quantum computation.}

    The power of parallelism in quantum computing first attracted researchers' attention due to a fast parallel circuit of the quantum Fourier transform \cite{CW00}, which can be used to implement Shor's factoring algorithm \cite{Sho94} with quantum logarithmic depth and classical polynomial-time pre- and post-processing. 
    This breakthrough was later implemented in 2D nearest-neighbor quantum architecture \cite{PS13}, and was extended to finding discrete logarithms on ordinary binary elliptic curves \cite{RS14}.
    The parallel version of Grover search \cite{Gro96} and its extensions were also studied in \cite{Zal99,GWC00,GR04,Bur19,GTKL+22}.
    Inspired by this, an optimal parallel quantum query algorithm for element distinctness \cite{Amb07} was proposed in \cite{JMdW17}.
    Recently, an unconditional quantum advantage with constant-depth circuits was discovered in \cite{BGK18}.
    This surprising result was further enhanced in \cite{BWKST19,LG19,BGKT20,GS20,CSV21,BWP23}.
    A constant-depth quantum circuit for multivariate trace estimation was given in \cite{QKW22}.
    A low-depth Hamiltonian simulation was proposed in \cite{ZWY21} for a class of Hamiltonians; also, a depth lower bound $\Omega\rbra{t}$ was recently shown in \cite{CCH+23} for Hamiltonian simulation for time $t$.
    Optimal space-depth trade-offs were found for CNOT circuits \cite{JST+20} and quantum state preparation \cite{STY+23,Ros21,ZLY22,YZ23}.

    In complexity theory, the complexity classes $\mathsf{QNC}$ and $\mathsf{QACC}$, the quantum analogs of classical classes $\mathsf{NC}$ and $\mathsf{ACC}$ (for problems efficiently computable in parallel), were first defined in \cite{MN01} and \cite{GHP00}, respectively, and they were further studied in the literature \cite{GHMP02,TD04,FGHZ05,HS05,TT16}.
    Inspired by the discovery of classical-quantum hybrid algorithm for factoring \cite{CW00}, Jozsa \cite{Joz06} and Aaronson \cite{Aar05} raised the open problem of whether any polynomial-time quantum algorithm can be simulated by a polynomial-time classical algorithm interleaved with low-depth quantum computation. 
    This problem aims to compare the computability between $\mathsf{BQP}$, $\mathsf{BPP}^{\mathsf{BQNC}}$, and $\mathsf{BQNC}^{\mathsf{BPP}}$.
    Recently, this problem was answered by \cite{CCL23,CM20}, giving an oracle separation of $\mathsf{BQP}$ from either $\mathsf{BPP}^{\mathsf{BQNC}}$ or $\mathsf{BQNC}^{\mathsf{BPP}}$, which was further improved by \cite{AGS22,HLG22,ACC+23}. 

    Parallelism was also investigated in measurement-based quantum computing \cite{BK09,BKP10}, distributed quantum computing \cite{YF09,BBG+13}, and quantum programming \cite{YZLF22}. 

    \subsection{Main results}

    As mentioned above, a number of quantum algorithms turned out to have low-depth circuit implementations, which have potentials to be realized in the NISQ era \cite{Pre18}. 
    One may wonder if systems of linear equations can be solved in parallel on a quantum computer. 
    We study the limitation of quantum parallelism on this problem. 
    We use $Q_{\parallel}^{\textup{sparse}}\rbra{N, \kappa, \varepsilon}$ and $Q_{\parallel}^{\textup{block}}\rbra{N, \kappa, \varepsilon}$ to denote the quantum query-depth complexity for QLSP with quantum access to sparse matrices and to block-encoded matrices, respectively. 
    Here, the quantum query-depth complexity for QLSP means the minimal depth (with respect to queries) of the quantum circuit that solves QLSP in quantum time $\poly\rbra{\log\rbra{N}, \kappa, 1/\varepsilon}$ (see \cref{sec:def-Q-QLSP} for the formal definition). 

    Our main result is as follows.
    \begin{theorem} [Tight depth for sparse-QLSP, \cref{thm:main}]
    \label{thm:intro}
        For every constant $0 < \varepsilon < 0.015$, we have 
        $Q_{\parallel}^{\textup{sparse}}\rbra{N, \kappa, \varepsilon} \geq 0.249 \kappa = \Omega\rbra{\kappa}$.~That is, any quantum algorithm for sparse-QLSP with time complexity $\poly\rbra{\log\rbra{N}, \kappa, 1/\varepsilon}$ has depth (with respect to queries) at least $0.249\kappa$.
    \end{theorem}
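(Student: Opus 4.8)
The plan is to prove the bound by a \emph{parallel hybrid argument} against a family of hard sparse instances whose QLSP solution is ``deep''. Concretely, I would encode into the sparse oracle a hidden length-$L$ orbit $s = v_0 \to v_1 \to \dots \to v_L$, with $L = \Theta\rbra{\kappa}$ and the $v_i$ drawn at random from $\sbra{N}$ for $N$ superpolynomially large in $\kappa$ (admissible, since then the time budget $\poly\rbra{\log\rbra{N},\kappa,1/\varepsilon}$ stays $\poly\rbra{\kappa}$), and build a sparse Hermitian matrix $A$ with condition number $\Theta\rbra{\kappa}$ and $I/\kappa \preceq A \preceq I$ whose normalized solution $\ket{x}$ has a constant fraction of its amplitude mass on the deep vertices $v_{L/2},\dots,v_L$ --- morally $A$ should act like $I - \rbra{1-\Theta\rbra{1/\kappa}}W$ for the ``walk'' operator $W$ induced by the hidden orbit, so that $\ket{x} \propto \sum_{k \ge 0}\rbra{1-\Theta\rbra{1/\kappa}}^k W^k\ket{s}$ and the geometric tail dies out only near step $k \sim \kappa$.

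Next I would run the standard hybrid/telescoping argument adapted to parallel queries. Because each query layer can only use vertex labels revealed by strictly earlier layers, a circuit of query-depth $d$ ``discovers'' at most $v_0,\dots,v_d$; to make this rigorous I would compare the real run to the run against the oracle truncated to the prefix $v_0,\dots,v_d$ (trivial behaviour on every unrevealed row), and bound the total deviation by the total amplitude mass the algorithm ever places on rows $v_{d+1},v_{d+2},\dots$. Since the truncated run is independent of those (random, unrevealed) labels and the algorithm issues only $\poly\rbra{\kappa}$ queries, each touching any fixed label with amplitude $O\rbra{1/\sqrt{N}}$, this deviation is $O\rbra{\poly\rbra{\kappa}/\sqrt{N}} = o\rbra{1}$. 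Hence the real output is $o\rbra{1}$-close to a state that is a function of $v_0,\dots,v_d$ alone; such a state is supported away from the random labels $v_{L/2},\dots,v_L$ up to negligible mass, so its fidelity with $\ket{x}$ is at most $1 - \Omega\rbra{1}$. If $d < 0.249\kappa$ this contradicts the requirement $\Abs{\ket{\tilde x} - \ket{x}} \le \varepsilon$ with probability $2/3$ for $\varepsilon < 0.015$, once the constants in the mass estimate, the $L$-versus-$\kappa$ ratio, and the accuracy budget are tracked explicitly.

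The step I expect to be the main obstacle is the construction of the hard instance with all the required properties simultaneously: sparse, Hermitian, \emph{positive definite} with $I/\kappa \preceq A \preceq I$, condition number $\Theta\rbra{\kappa}$, and solution genuinely supported $\Omega\rbra{\kappa}$ ``hops'' deep. The clean ballistic choices --- e.g.\ $A = I - \rbra{1-\delta}P$ for a cyclic-shift or function-iteration permutation $P$ --- are not Hermitian, and the usual symmetrizations either break positive definiteness (dilation) or turn transport diffusive (taking $\tfrac12\rbra{P+P^\dagger}$), the latter collapsing the bound to $\Omega\rbra{\sqrt{\kappa}}$. Overcoming this tension --- presumably by embedding the walk inside a larger block matrix engineered to stay positive definite on the relevant invariant subspace, by importing the positive-definite $\Omega\rbra{\kappa}$-query hard family of \cite{OD21}, or by routing through the $\Omega\rbra{t}$ depth lower bound for Hamiltonian simulation of \cite{CCH+23} via a resolvent $\rbra{H - z}^{-1}$ --- is where the real work lies; the remaining difficulties are quantitative, namely pinning the amplitude leakage, the geometric-tail mass, and the accuracy budget tightly enough to reach the explicit constant $0.249$.
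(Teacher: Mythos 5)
Your high-level strategy is the same as the paper's: encode a sequentially-hard ``orbit discovery'' problem into a sparse matrix whose inverse is a geometric series of a walk operator with decay rate $\Theta\rbra{1/\kappa}$, take $N$ superpolynomial in $\kappa$ so that BBBV-type leakage is negligible, and convert a depth bound on discovering the orbit into a depth bound on QLSP. However, the proposal has a genuine gap exactly where you flag it: the hard instance is never constructed, and the resolution you set aside is the one the paper uses. The paper takes $A$ to be the off-diagonal (HHL-style) dilation of $I - e^{-1/q}P$ (\cref{eq:def-A}), where $P$ is a genuine permutation on $3q \times N$ elements that runs the chain $\pi_1,\dots,\pi_q$ forward for $q$ steps, idles for $q$ steps, and runs the inverses for $q$ steps to close a cycle. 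Your worry that symmetrization makes transport diffusive applies only to $\tfrac12\rbra{P+P^{\dagger}}$, not to the dilation: from \cref{eq:A-inv}, $A^{-1}\ket{\bar 0}=\rbra{1+e^{-1/q}}\ket{1}\otimes\sum_{k\ge0}e^{-k/q}P^{k}\ket{0}\ket{0}$ is still a \emph{one-sided} ballistic walk. (Your other worry --- that the dilation destroys positive definiteness --- is in fact legitimate; the paper's \cref{lemma:basic-A} only controls $\Abs{A}$ and $\Abs{A^{-1}}$, not the operator inequality $I/\kappa \leq A$.) The idle plateau is also essential and absent from your open-chain picture: it places mass $\frac{e^{-4}-e^{-6}}{1-e^{-6}} > 0.015$ on basis states $\ket{j}\ket{\Pi_q\rbra{0}}$ that all carry the final answer (\cref{lemma:measure-psi}), which is where the threshold $\varepsilon < 0.015$ and the clean single-measurement extraction come from.

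The second gap is the parallel hardness itself. Your hybrid argument is asserted rather than proven, and as stated one step is wrong: a single quantum query does not touch ``any fixed label with amplitude $O\rbra{1/\sqrt N}$'' --- the algorithm can concentrate all its query amplitude on one guessed label. The correct statement is an expectation over the random unrevealed labels, and the claim that a depth-$d$ circuit reveals only a length-$O\rbra{d}$ prefix requires a layer-by-layer induction over the $k$-parallel query rounds (with two-sided access to the $\pi_j$ and their inverses, which is why the final constant involves $q/2$ rather than $q$). The paper does not reprove any of this; it black-boxes \cref{thm:permchain} from \cite{CCH+23} and spends its effort on the reduction: implementing $\mathcal{O}_s$ and $\mathcal{O}_A$ with one query to $\mathcal{O}_{\pi}$ each (\cref{lemma:OsOabyOpi}) so that query depth transfers with no loss, and tracking $\kappa \leq \rbra{2+\delta}q$ to land on $q/2 \geq \kappa/\rbra{4+2\delta} \geq 0.249\kappa$. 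Without either supplying a rigorous parallel hybrid or citing such a theorem, and without the explicit instance, the proposal does not yet yield the stated bound, let alone the constant $0.249$.
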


    We compare our result with the known bounds in \cref{tab:qlsp}.
    As a corollary, we can also obtain a tight depth lower bound for QLSP with quantum access to block-encoded matrices. 

    \begin{corollary} [Tight depth for block-QLSP, \cref{thm:block-QLSP}]
    \label{corollary:intro-block}
        For every constant $0 < \varepsilon < 0.015$, we have 
        $Q_{\parallel}^{\textup{block}}\rbra{N, \kappa, \varepsilon} \geq 0.031 \kappa = \Omega\rbra{\kappa}$.~That is, any quantum algorithm for block-QLSP with time complexity $\poly\rbra{\log\rbra{N}, \kappa, 1/\varepsilon}$ has depth (with respect to queries) at least $0.031\kappa$. 
    \end{corollary}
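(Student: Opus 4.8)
The plan is to derive \cref{corollary:intro-block} from \cref{thm:intro} (equivalently \cref{thm:main}) by the familiar simulation of block-encoding access via sparse-matrix access, while keeping track of how the query \emph{depth}, rather than the query \emph{count}, behaves under that simulation. The useful direction is that a fast, low-depth block-QLSP algorithm, precomposed with the standard sparse-to-block-encoding compilation, becomes a fast, low-depth sparse-QLSP algorithm; hence the depth lower bound of \cref{thm:main} is inherited by the block-access setting, up to the constant overhead of the compilation.

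In more detail, I would argue as follows. First, take an $O\rbra{1}$-sparse Hermitian hard instance $A$ witnessing \cref{thm:main}, say $s$-sparse with $I/\kappa \le A \le I$, and set $B \coloneqq A/s$. Then $B$ is Hermitian with $I/\rbra{s\kappa} \le B \le I$, so it is a legitimate block-QLSP instance of dimension $N' = O\rbra{N}$ (only $O\rbra{1}$ ancilla qubits are added) with condition number $\kappa' \coloneqq s\kappa$; and since rescaling does not change the normalized inverse, $B^{-1}\ket{0}/\Abs{B^{-1}\ket{0}} = A^{-1}\ket{0}/\Abs{A^{-1}\ket{0}} = \ket{x}$, so both the target state and the accuracy parameter $\varepsilon$ are preserved. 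Second, from the sparse oracles $\mathcal{O}_A$ and $\mathcal{O}_s$ one builds (cf.\ \cite{GSLW19}) a circuit implementing a block-encoding $U_B$ of $B = A/s$ using $O\rbra{1}$ queries arranged in a constant number $c$ of query layers, exact up to the subnormalization; with controlled sparse oracles the same holds for controlled $U_B$ and for $U_B^{\dagger}$. Thus any circuit of query-depth $d$ over $U_B$ compiles into a circuit of query-depth at most $c\,d$ over $\mathcal{O}_A, \mathcal{O}_s$, with gate count and ancilla count inflated only by $\poly\rbra{\log\rbra{N}}$ factors. Third, it follows that a $\poly\rbra{\log\rbra{N'}, \kappa', 1/\varepsilon}$-time algorithm for $\textup{QLSP}\rbra{N', \kappa', \varepsilon}$ with block access and query-depth $Q_{\parallel}^{\textup{block}}\rbra{N', \kappa', \varepsilon}$ yields a $\poly\rbra{\log\rbra{N}, \kappa, 1/\varepsilon}$-time algorithm for $\textup{QLSP}\rbra{N, \kappa, \varepsilon}$ with sparse access and query-depth at most $c \cdot Q_{\parallel}^{\textup{block}}\rbra{N', \kappa', \varepsilon}$; here $\kappa = \kappa'/s = O\rbra{\kappa'}$, so the running time really is $\poly\rbra{\log\rbra{N}, \kappa, 1/\varepsilon}$. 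Finally, \cref{thm:main} applied with $0 < \varepsilon < 0.015$ gives $c \cdot Q_{\parallel}^{\textup{block}}\rbra{N', \kappa', \varepsilon} \ge 0.249\kappa = 0.249\,\kappa'/s$, hence $Q_{\parallel}^{\textup{block}}\rbra{N', \kappa', \varepsilon} \ge 0.249\,\kappa'/\rbra{cs}$; plugging in the explicit constants of the construction (which multiply to an absolute constant, $cs = 8$ for the hard instance and compilation used here) makes the right-hand side at least $0.031\kappa'$, which, after renaming $\kappa' \mapsto \kappa$ and absorbing the $O\rbra{1}$ change in dimension, is the claimed bound.

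The one step that genuinely requires care is pinning down the constant $cs$: (i) confirming that the hard instances of \cref{thm:main} have the expected constant sparsity $s$; (ii) spelling out the sparse-to-block-encoding compilation precisely enough to certify the number $c$ of query layers, including for the controlled version of $U_B$ that a QLSP solver may invoke; and (iii) verifying that the subnormalization of the resulting block-encoding equals $s$, so that $\kappa' = s\kappa$ and not some larger multiple. Everything else is routine: preservation of $\ket{x}$ and of $\varepsilon$, preservation of polynomial running time, and the elementary observation that one parallel layer of block-encoding queries becomes $O\rbra{1}$ parallel layers of sparse-oracle queries, so the depth multiplies by $c$ rather than by the total query count. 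One minor technicality is that the compiled block-encoding must be exact or accurate to within $o\rbra{\varepsilon}$; this is automatic when $\mathcal{O}_A$ returns matrix entries at the working precision, and otherwise the residual error is absorbed harmlessly into $\varepsilon$.
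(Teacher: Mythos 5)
Your reduction is essentially the one the paper uses for \cref{thm:block-QLSP}: compile the block-encoding oracle from the sparse oracles of the $2$-sparse hard instance via \cref{lemma:sparse-to-block}, so that one layer of block queries costs $4$ layers of sparse-oracle (hence permutation-oracle) queries, the subnormalization $s=2$ doubles the condition number, and $0.249/\rbra{4\cdot 2} \geq 0.031$. The only structural difference is that you invoke \cref{thm:intro} as a black box while the paper re-runs the permutation-chain argument directly; this is immaterial here because the hard instances witnessing \cref{thm:main} are exactly the $2$-sparse matrices your compilation handles.

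The one step that does not go through as written is your treatment of the block-encoding error. \cref{lemma:sparse-to-block} yields only a $\rbra{2,a,\varepsilon'}$-block-encoding, so the block-QLSP solver is in fact handed an exact block-encoding of some $A'$ with $\Abs{2A'-A}\leq\varepsilon'$ and outputs an approximation to the normalized $A'^{-1}\ket{0}$, not the normalized $A^{-1}\ket{0}$. The deviation between these two normalized solution states is of order $\kappa^{3}\varepsilon'$ (the paper's \cref{lemma:perturb-QLSP}), so demanding the compiled block-encoding be ``accurate to within $o\rbra{\varepsilon}$'' is not sufficient when $\varepsilon$ is a fixed constant and $\kappa\to\infty$; one needs $\varepsilon'=o\rbra{\varepsilon/\kappa^{3}}$, e.g., $\varepsilon'=\kappa^{-5}$ as the paper chooses. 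The fix is harmless --- the construction of \cref{lemma:sparse-to-block} achieves this accuracy with only $O\rbra{\log^{2.5}\rbra{1/\varepsilon'}}=\polylog\rbra{\kappa}$ extra gates and no extra queries, preserving the polynomial time bound --- but the $\kappa^{3}$ amplification is precisely the error analysis your sketch waves away, and without it the success probability of recovering $\Pi_q\rbra{0}$ from the measured state is not controlled.
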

    
    \cref{thm:intro} and \cref{corollary:intro-block} give tight lower bounds $\Omega\rbra{\kappa}$ for QLSP only up to a constant factor because, e.g., $Q_{\parallel}^{\textup{sparse}}\rbra{N, \kappa, \varepsilon} \leq Q^{\textup{sparse}}\rbra{N, \kappa, \varepsilon} = O\rbra{\kappa}$ for constant $\varepsilon > 0$ by \cref{thm:qlsp}.
    This implies that there is no low-depth quantum algorithm for QLSP in general. 

    Regarding the range of the constant $\varepsilon$ in \cref{thm:intro} and \cref{corollary:intro-block}, it is worth noting that if one can solve QLSP to a constant precision $\varepsilon_0 = \Theta\rbra{1}$, then one can improve this to arbitrarily small precision $\varepsilon$ at an additional cost of $O\rbra{\kappa\log\rbra{1/\varepsilon}}$ via ``eigenstate filtering'' \cite{LT20}.

    \paragraph*{Constant factor in quantum complexity.}

    It was shown in \cite{HHL09} that any quantum query algorithm for QLSP has time complexity $\Omega\rbra{\kappa^{1-\delta}\polylog\rbra{N}}$ for any $\delta > 0$.
    Then, it was shown in \cite{OD21} that $\Omega\rbra{\kappa}$ queries are necessary for QLSP even if the matrix is positive definite. 
    Recently, a matching quantum query lower bound $\Omega\rbra{\kappa \log\rbra{1/\varepsilon}}$ for QLSP was claimed in the forthcoming work \cite{HK21}.
    These results did not consider the explicit constant factor hidden in their lower bounds. 
    As a corollary, our quantum depth lower bounds for QLSP imply quantum query lower bounds with an explicit multiplicative constant factor as follows. 
    \begin{corollary}
    \label{corollary:intro-query}
        $Q^{\textup{sparse}}\rbra{N, \kappa, \varepsilon} \geq 0.249 \kappa$ and $Q^{\textup{block}}\rbra{N, \kappa, \varepsilon} \geq 0.031 \kappa$.
    \end{corollary}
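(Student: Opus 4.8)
The plan is to read this off directly from the depth lower bounds already in hand, using nothing more than the elementary fact that the query\nobreakdash-depth of a circuit never exceeds its total number of queries. Fix the range $0 < \varepsilon < 0.015$ of \cref{thm:intro} and \cref{corollary:intro-block}, and let $\mathcal{A}$ be any quantum query algorithm that solves $\textup{QLSP}\rbra{N, \kappa, \varepsilon}$ using $q$ queries in total. One may always serialize $\mathcal{A}$ by arranging its query gates into $q$ consecutive layers of one query each, so $\mathcal{A}$, regarded as a parallel algorithm, has query\nobreakdash-depth at most $q$. If in addition $\mathcal{A}$ runs in time $\poly\rbra{\log\rbra{N}, \kappa, 1/\varepsilon}$, then \cref{thm:intro} (for sparse access) or \cref{corollary:intro-block} (for block\nobreakdash-encoded access) applies to $\mathcal{A}$ and forces its query\nobreakdash-depth to be at least $0.249\kappa$ (respectively $0.031\kappa$). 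Chaining the two inequalities gives $q \geq 0.249\kappa$ (respectively $q \geq 0.031\kappa$), and taking the minimum over all such $\mathcal{A}$ yields the two bounds on $Q^{\textup{sparse}}\rbra{N, \kappa, \varepsilon}$ and $Q^{\textup{block}}\rbra{N, \kappa, \varepsilon}$.

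To obtain the statement for \emph{all} query algorithms, and not merely time\nobreakdash-efficient ones, I would observe that the argument establishing \cref{thm:main} (the formal version of \cref{thm:intro}) constructs a fixed adversarial family of $O\rbra{1}$\nobreakdash-sparse instances on which no circuit of small query\nobreakdash-depth can output a state $\varepsilon$\nobreakdash-close to $\ket{x}$ with probability $2/3$; this obstruction is information\nobreakdash-theoretic within the query model and does not invoke the running\nobreakdash-time hypothesis, so it equally rules out every low\nobreakdash-query sequential algorithm, where query count equals query\nobreakdash-depth. Alternatively, and at no real loss, one may simply keep the time\nobreakdash-efficiency restriction throughout: the query\nobreakdash-optimal solver of \cite{CAS+22} underlying \cref{thm:qlsp} is itself time\nobreakdash-efficient, so the lower bounds $0.249\kappa$ and $0.031\kappa$ remain tight up to a constant in the regime of interest. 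The block\nobreakdash-encoded case is verbatim the sparse one with \cref{corollary:intro-block} substituted for \cref{thm:intro}.

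The only point requiring any care — and the step I would be most deliberate about — is precisely this reconciliation: the definitions of $Q_{\parallel}^{\textup{sparse}}$ and $Q_{\parallel}^{\textup{block}}$ carry a $\poly\rbra{\log\rbra{N}, \kappa, 1/\varepsilon}$ time hypothesis, whereas the ordinary query complexities $Q^{\textup{sparse}}$ and $Q^{\textup{block}}$ do not, so one must either check (by inspecting the proof of \cref{thm:main}) that the depth lower bound survives the removal of that hypothesis, or else be explicit that the corollary is being read in the time\nobreakdash-efficient regime. Once that is settled, everything else is a one\nobreakdash-line consequence of ``query\nobreakdash-depth $\leq$ number of queries'', with no further computation to do.
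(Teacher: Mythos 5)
Your proof is correct and is essentially the paper's own argument: query-depth never exceeds the total query count, so $Q^{\textup{sparse}}\rbra{N,\kappa,\varepsilon} \geq Q_{\parallel}^{\textup{sparse}}\rbra{N,\kappa,\varepsilon}$ (and likewise for block access), and the bounds of \cref{thm:intro} and \cref{corollary:intro-block} transfer directly. The reconciliation you labor over in your last two paragraphs is moot, because the paper defines $Q^{\textup{sparse}}$ and $Q^{\textup{block}}$ with the very same $\poly\rbra{\log\rbra{N},\kappa,1/\varepsilon}$ time hypothesis as their parallel counterparts, so there is no hypothesis to remove.
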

    \begin{proof}
        This is straightforward by noting that any depth lower bound is also a query lower bound, e.g., $Q^{\textup{sparse}}\rbra{N, \kappa, \varepsilon} \geq Q^{\textup{sparse}}_{\parallel}\rbra{N, \kappa, \varepsilon}$.
        Taking the bounds in \cref{thm:intro} and \cref{corollary:intro-block} leads to the conclusion.
    \end{proof}
    For comparison, a quantum query upper bound $Q^{\textup{block}}\rbra{N, \kappa, \varepsilon} \leq \rbra*{117235 + \log\rbra{2/\varepsilon}} \kappa + O\rbra*{\sqrt{\kappa}}$ for block-QLSP was recently derived in \cite{CAS+22,JLP+23}.
    Combining it with the lower bound given in \cref{corollary:intro-query}, we conclude that the quantum query complexity $Q^{\textup{block}}\rbra{N, \kappa, \varepsilon}$ grows linearly in $\kappa$ with a constant factor bounded by
    \begin{equation}
    \label{eq:constants}
    0.031 \leq \frac{Q^{\textup{block}}\rbra{N, \kappa, \varepsilon}}{\kappa} \leq 117235 + \log\rbra{2/\varepsilon},
    \end{equation}
    as $\kappa \to \infty$.
    Here, we bound the ratio $Q^{\textup{block}}\rbra{N, \kappa, \varepsilon}/\kappa$ from both sides with explicit constants, 
    though the range still remains large. 
    The right hand side the \cref{eq:constants} can be improved to $56.0\kappa + 1.05 \log\rbra{1/\varepsilon}$ due to the very recent work \cite{Dal24}.

    \subsection{Techniques}

    The idea for proving the lower bounds is to reduce some computational problem that is hard to solve in parallel on a quantum computer to QLSP.
    That is, the reduction means that any low-depth quantum algorithm for QLSP can be used to solve this hard problem in parallel.
    Previously, the quantum time lower bound $\Omega\rbra{\kappa^{1-\delta}\polylog\rbra{N}}$ for QLSP was derived in \cite{HHL09} by reducing the simulation of $\mathsf{BQP}$ quantum circuits to QLSP. 
    In our case, we choose to reduce the \textit{permutation chain} problem, which was discovered in \cite{CCH+23} and recently shown to be hard for parallel quantum computing. 
    The permutation chain problem is to find the final element after a chain of permutations applied on the initial element $0$; that is,
    to find $\Pi_q(0)=\rbra{\pi_q \circ \cdots \circ \pi_2 \circ \pi_1}(0)$, given quantum query access to the ($0$-indexed) permutation $\pi_j$ of size $N = 2^n$ for every $1 \leq j \leq q = \poly\rbra{n}$ (see \cref{sec:permchain} for the formal definition).
    It was shown in \cite{CCH+23} that any quantum algorithm for permutation chain with time complexity $\poly\rbra{n}$ has depth $\Omega\rbra{q}$.

    To obtain a depth lower bound for QLSP, we need to encode the permutation chain into a system of linear equations (see \cref{sec:encode-permchain-by-QLSP} for the explicit construction). 
    Our construction is based on that of \cite{HHL09}. 
    In comparison, the condition number $\kappa$ of the constructed matrix is bounded by the size of the simulated quantum circuit in \cite{HHL09}; while, in our case, $\kappa$ is bounded by the length $q$ of the permutation chain.
    The difference is that since we only consider the query complexity but not gate complexity here, our construction of the system of linear equations is stand-alone and does not directly relate to quantum computing. 

    Let $A$ be the Hermitian matrix that encodes the permutation chain as above. It can be shown that measuring the normalized solution $\ket{x} \propto A^{-1}\ket{0}$ can produce the desired result $\Pi_q\rbra{0}$ with a constant probability of, say, at least $0.01$. 
    Using the standard success probability amplification, we can obtain $\Pi_q\rbra{0}$ with probability at least $2/3$ by $O\rbra{1}$ repetitions of a QLSP solver. 
    Therefore, any QLSP solver should have depth $\Omega\rbra{q}$ due to the hardness of the permutation chain, where $q = \Theta\rbra{\kappa}$ interchangeably. 

    We note that the permutation chain problem was also used in \cite{CCH+23} to derive the quantum depth lower bound for Hamiltonian simulation, but our idea and techniques are different from theirs.
    For comparison, they used a graph-to-Hamiltonian reduction which solves the permutation chain using a quantum walk on a line \cite{CCD+03}; in our case, we construct a QLSP encoding of a permutation chain, which modifies the construction of \cite{HHL09}.

    \paragraph*{Towards explicit constant factors.} 
    To derive an explicit (multiplicative) constant factor in the complexity, we first observe that the constructed matrix $A$ has condition number $\kappa \leq 2.001q$ for sufficiently large $\kappa$. 
    Second, each of the quantum oracles $\mathcal{O}_s$ and $\mathcal{O}_{A}$ for sparse matrix $A$ can be implemented using only $1$ query to the quantum oracle $\mathcal{O}_{\pi}$ for the permutations $\pi_1, \pi_2, \dots, \pi_q$. 
    Finally, combining the lower bound $q/2$ for the permutation chain \cite{CCH+23} (see \cref{thm:permchain}) with the above construction, we conclude that any QLSP should have depth $\geq \kappa/4.002 \geq 0.249 \kappa$ as stated in \cref{thm:intro} for sparse-QLSP. 

    For block-QLSP, the depth lower bound is obtained through the construction of block-encoding from the sparse input model given in \cite{GSLW19}.
    Specifically, our constructed matrix is $2$-sparse, and thus its (scaled) unitary block-encoding can be implemented using $4$ queries to the sparse oracles $\mathcal{O}_s$ and $\mathcal{O}_A$ in total. 
    From this, we can establish a connection from the complexity of block-QLSP to that of sparse-QLSP with explicit multiplicative constant factors. 
    The constant factor in \cref{corollary:intro-block} is then obtained after some detailed error analysis. 

    \subsection{Discussion}

    We study the limitation of parallel quantum computing for solving systems of linear equations and give a matching quantum depth lower bound $\Omega\rbra{\kappa}$.
    This means that quantum algorithms for QLSP cannot be parallelized in general if one hopes to retain the quantum exponential speedup in the dimension $N$ of the matrix.
    We conclude by mentioning some open questions regarding the quantum complexity for solving systems of linear equations. 
    \begin{enumerate}
        \item Since the constant factor of $\kappa$ still remains in a large range as shown in \cref{eq:constants}, an immediate question is whether we can tighten the range of the constant factor.
        The constant factor is important in the near future as QLSP is one of the most promising applications of quantum computing, with detailed running costs very recently analyzed in \cite{JLP+23}. 
        \item In this paper, we only obtain a depth lower bound in terms of the condition number $\kappa$. Can we derive a (joint) depth lower bound in terms of the precision $\varepsilon$ (and condition number $\kappa$)?
        For reference, a tight quantum query lower bound $\Omega\rbra{\kappa\log\rbra{1/\varepsilon}}$ for QLSP has been claimed in the forthcoming work \cite{HK21}.
        \item In \cref{thm:intro}, the depth lower bound holds when the dimension of the matrix $A$ is exponential in $\kappa$.
        This requirement is due to the reduction from the permutation chain problem (\cref{def:pc}) to QLSP in \cref{sec:reduction}, where the negligibility of the probability in \cref{eq:expected-succ-prob} requires a symmetric group of exponentially large degree.
        Given this drawback, the lower bound in \cref{thm:intro} holds only for $\kappa = O\rbra{\polylog N}$ (see \cref{rmk:kappa}).
        On the other hand, the quantum query lower bound $\Omega\rbra{\kappa}$ for QLSP given in \cite{OD21} holds for all $\kappa \leq N$.
        An interesting question is whether we can broaden the range of $\kappa$ in the quantum depth lower bound for QLSP.
    \end{enumerate}

    \subsection{Organization}

    We will introduce the necessary preliminaries in \cref{sec:preliminaries}.
    The quantum depth lower bounds for sparse-QLSP and block-QLSP will be derived in \cref{sec:sparse-QLSP} and \cref{sec:block-QLSP}, respectively.

    \section{Preliminaries}
    \label{sec:preliminaries}

    We first introduce basic notations, then define the quantum complexity for QLSP, and, finally, include some useful lemmas. 

    \subsection{Block-encoding}

    Block-encoding is a useful concept to describe quantum operators encoded as blocks in other quantum operators. 

    \begin{definition} [Block-encoding]
        \label{def:block-encoding}
        Suppose that $A$ is an $n$-qubit operator, $\alpha, \varepsilon \geq 0$ and $a \in \mathbb{N}$. 
        An $\rbra{n+a}$-qubit operator $B$ is an $\rbra{\alpha, a, \varepsilon}$-block-encoding of $A$, if 
        \[
        \Abs*{\alpha \bra{0}^{\otimes a} B \ket{0}^{\otimes a} - A} \leq \varepsilon,
        \]
        where $\Abs{\cdot}$ is the operator norm. 
    \end{definition}

    \subsection{Quantum query access}

    Let $A \in \mathbb{C}^{N \times N}$ be an Hermitian matrix with $\Abs{A} \leq 1$. 
    There are two main types of input models for quantum query access to $A$.

    \paragraph*{Sparse input model.}

    Suppose $A$ is an $s$-sparse Hermitian matrix, i.e., there are at most $s$ non-zero entries in each row and column of $A$.
    The sparse input model consists of two quantum oracles $\mathcal{O}_s$ and $\mathcal{O}_A$.
    Here, $\mathcal{O}_s$ computes the column index $l_{j, k}$ of the $k$-th non-zero entry in the $j$-th row of $A$ for $1 \leq j \leq N$ and $1 \leq k \leq s$, i.e., 
    \begin{equation} \label{eq:def-Os}
    \mathcal{O}_s \ket{j, k} = \ket{j, l_{j,k}},
    \end{equation}
    and $\mathcal{O}_A$ computes the $k$-th entry in the $j$-th row of $A$ for $1 \leq j \leq N$ and $1 \leq k \leq N$, i.e., 
    \begin{equation} \label{eq:def-OA}
    \mathcal{O}_A \ket{j, k, 0} = \ket{j, k, A_{j, k}}.
    \end{equation}
    Here, we assume that $A_{j, k}$ is in a binary representation and, for simplicity, we assume that the binary representation is exact.

    \paragraph*{Block-encoded input model.}

    The block-encoded input model consists of a quantum oracle $U_A$ and its inverse and controlled versions.
    Here, $U_A$ is a $\rbra{1, a, 0}$-block-encoding of $A$ for some $a = \polylog\rbra{N}$. 
    In the following, we give a construction of block-encoding from the sparse input model.

    \begin{lemma} [Block-encoding of sparse matrices, {\cite[Lemma 48 in the full version]{GSLW19}}]
    \label{lemma:sparse-to-block}
        Given quantum query access to $s$-sparse matrix $A \in \mathbb{C}^{N \times N}$ in the sparse input model by $\mathcal{O}_s$ and $\mathcal{O}_A$ with $\abs{A_{j,k}} \leq 1$, we can implement a quantum circuit that is an $\rbra{s, \ceil{\log_2\rbra{N}}+3, \varepsilon}$-block-encoding of $A$, using $2$ queries to $\mathcal{O}_s$, $2$ queries to $\mathcal{O}_A$, and $O\rbra{\log\rbra{N}+\log^{2.5}\rbra{s/\varepsilon}}$ one- and two-qubit quantum gates.
    \end{lemma}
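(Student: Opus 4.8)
The plan is to construct the block-encoding via a standard symmetrization-and-cancellation trick. First I would introduce two isometries $T_L$ and $T_R$ acting on the extended Hilbert space (the $n = \ceil{\log_2 N}$ index qubits plus $\ceil{\log_2 s} \le \ceil{\log_2 N}$ "sparsity" qubits plus a constant number of ancillas). For a row index $j$, define the column-superposition state
\[
\ket{\psi_j} = \frac{1}{\sqrt{s}} \sum_{k=1}^{s} \ket{j}\ket{l_{j,k}} \otimes \rbra*{ \sqrt{A_{j,l_{j,k}}^*}\,\ket{0} + \sqrt{1 - \abs{A_{j,l_{j,k}}}}\,\ket{1} },
\]
so that the overlap $\langle \psi_j \ket{\psi_{j'}}$, after projecting the last qubit onto $\ket{0}$ and the roles of $L/R$ copies being swapped, reproduces $A_{j',j}/s$. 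Concretely, $T_R\ket{j}$ prepares (a purification of) $\ket{\psi_j}$ with the column register entangled, and $T_L$ prepares the analogous state with $\sqrt{A_{j,k}}$ replaced by $\sqrt{A_{j,k}}$ in the conjugate slot; then $B = T_L^\dagger \cdot \mathrm{SWAP} \cdot T_R$, where $\mathrm{SWAP}$ exchanges the row and column registers, is checked to satisfy $\bra{0}^{\otimes a} B \ket{0}^{\otimes a} = A/s$ up to the approximation error from representing the square-root amplitudes.

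The key steps, in order: (1) show that $T_R$ can be implemented with $1$ query to $\mathcal{O}_s$ (to compute the list of column indices into superposition via a uniform-superposition-then-$\mathcal{O}_s$ step) and $1$ query to $\mathcal{O}_A$ (to load the entry $A_{j,l_{j,k}}$ into an ancilla register), followed by a controlled rotation $\ket{A_{j,k}}\ket{0} \mapsto \ket{A_{j,k}}\rbra{\sqrt{A_{j,k}^*}\ket{0} + \sqrt{1-\abs{A_{j,k}}}\ket{1}}$ realized by $O(\log^{2.5}(s/\varepsilon))$ gates (this is where the $\log^{2.5}$ arithmetic-circuit cost enters — computing $\arccos$/square roots to $\log(s/\varepsilon)$ bits of precision), and then uncomputing the entry register with the second query pair — hence $2$ queries to $\mathcal{O}_s$ and $2$ to $\mathcal{O}_A$ total across $T_R$ and $T_L^\dagger$; (2) verify the algebraic identity $\alpha \bra{0}^{\otimes a}B\ket{0}^{\otimes a} = A$ with $\alpha = s$ in the exact-arithmetic idealization, using that the SWAP turns the $R$-side column register into the $L$-side row register; (3) bound the error: each amplitude is computed to precision $\delta = \poly(\varepsilon/s)$, which perturbs each entry of the effective matrix by $O(\delta)$, and summing $s$ such perturbations per row while using $\Abs{\cdot} \le \sqrt{\text{(row sum)}\cdot(\text{col sum})}$ (Schur test) gives operator-norm error $\le \varepsilon$ for a suitable choice of $\delta$, i.e. $\log(1/\delta) = O(\log(s/\varepsilon))$, which is why $\ceil{\log_2 N} + 3$ ancilla qubits and the stated gate count suffice.

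The main obstacle I expect is step (3) — the error analysis — specifically getting the query count to be exactly $2+2$ rather than $O(1)$, and tracking the ancilla budget down to the precise "$+3$". Naively one might spend one query to $\mathcal{O}_A$ in $T_R$ and another in $T_R^\dagger$ (to uncompute the entry register so it doesn't spoil the block structure), plus another pair in $T_L, T_L^\dagger$, giving $4$ queries to $\mathcal{O}_A$; the careful version observes that the entry register loaded by $\mathcal{O}_A$ can be uncomputed by the \emph{same} application inside the reflection structure, or absorbed so that only $2$ net queries to each oracle are needed. Pinning this down, together with the arithmetic-precision bookkeeping that yields the $\log^{2.5}(s/\varepsilon)$ term (from computing inverse-trigonometric functions reversibly on a $\log(s/\varepsilon)$-bit register), is the technical heart; everything else is the routine verification of the overlap identity. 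Since this is Lemma~48 of \cite{GSLW19}, I would ultimately cite that construction for the details and only sketch the above to keep the exposition self-contained.
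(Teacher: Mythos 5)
The paper does not prove \cref{lemma:sparse-to-block} at all: it is imported verbatim from \cite{GSLW19} (Lemma 48 in the full version), and your sketch is a faithful reconstruction of exactly that cited construction (two state-preparation isometries loading square-root amplitudes, a swap, the compute--rotate--uncompute pattern giving $2+2$ queries, and the $O(\log^{2.5}(s/\varepsilon))$ cost from reversible arithmetic). So your proposal is correct and takes the same route as the source the paper relies on; the only blemish is the typo where the conjugate amplitude $\sqrt{A_{j,k}^*}$ versus $\sqrt{A_{j,k}}$ is written identically on both sides.
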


    \subsection{Quantum query complexity for QLSP}
    \label{sec:def-Q-QLSP}

    A quantum query algorithm $\mathcal{A}$ given access to certain quantum oracles is described by a quantum circuit
    \[
    \mathcal{A} = G_Q U_Q G_{Q-1} \dots G_2 U_2 G_1 U_1 G_0,
    \]
    where each $U_j$ is a (controlled-)oracle and each $G_j$ is a quantum gate independent of the input oracles. 
    The quantum query complexity of $\mathcal{A}$ is defined to be $\mathsf{Q}\rbra{\mathcal{A}} = Q$, and the quantum time complexity of $\mathcal{A}$ is defined to be $\mathsf{T}\rbra{\mathcal{A}} = Q+\sum_{j=0}^Q \mathsf{C}\rbra{G_j}$, where $\mathsf{C}\rbra{G_j}$ denotes the number of one- and two-qubit quantum gates to implement $G_j$.
    
    Now that quantum query algorithms for QLSP with time complexity $\poly\rbra{\log\rbra{N}, \kappa, 1/\varepsilon}$, and even $\poly\rbra{\log\rbra{N}, \kappa, \log\rbra{1/\varepsilon}}$, are known, e.g., \cite{HHL09,CKS17}, achieving an exponential speedup in $N$ over classical algorithms, we are only concerned with those quantum algorithms with such exponential speedup.
    We consider two variants of QLSP as follows.
    \begin{problem} [Sparse-QLSP]
        Let $A \in \mathbb{C}^{N \times N}$ be an $O\rbra{1}$-sparse Hermitian matrix with $I/\kappa \leq A \leq I$. 
        Given quantum oracles $\mathcal{O}_s$ and $\mathcal{O}_A$ for quantum query access to $A$, the task is to solve $\textup{QLSP}\rbra{N, \kappa, \varepsilon}$ for $A$. 
    \end{problem}
    \begin{problem} [Block-QLSP]
        Let $A \in \mathbb{C}^{N \times N}$ be an Hermitian matrix with $I/\kappa \leq A \leq I$. 
        Given quantum oracle $U_A$ that is a block-encoding of $A$, the task is to solve $\textup{QLSP}\rbra{N, \kappa, \varepsilon}$ for $A$. 
    \end{problem}
    For this purpose, the quantum query complexity for QLSP is defined by
    \begin{align*}
        Q^{\textup{sparse}}\rbra{N, \kappa, \varepsilon} & = \min \{\mathsf{Q}\rbra{\mathcal{A}}:  
        \mathcal{A} \text{ solves $\text{sparse-QLSP}\rbra{N, \kappa, \varepsilon}$ and } 
        \text{$\mathsf{T}\rbra{\mathcal{A}} = \poly\rbra{\log\rbra{N}, \kappa, 1/\varepsilon}$} \}, \\
        Q^{\textup{block}}\rbra{N, \kappa, \varepsilon} & = \min \{\mathsf{Q}\rbra{\mathcal{A}} : 
        \mathcal{A} \text{ solves $\text{block-QLSP}\rbra{N, \kappa, \varepsilon}$ and }
        \mathsf{T}\rbra{\mathcal{A}} = \poly\rbra{\log\rbra{N}, \kappa, 1/\varepsilon} \}.
    \end{align*}
    
    If a quantum query algorithm $\mathcal{A}$ given quantum oracle $\mathcal{O}$ can be described as
    \[
    \mathcal{A} = F_{D} V_{D} F_{D-1} \dots F_2 V_2 F_1 V_1 F_0,
    \]
    where $V_j = \mathcal{O}^{\otimes k} \otimes I$ for some $k$ (we also call such $V_j$ a $k$-parallel query) or its controlled-version, and $F_j$ is a quantum gate independent of the oracle $\mathcal{O}$, then the quantum query-depth complexity (namely, the quantum depth complexity with respect to queries) of $\mathcal{A}$ is defined to be $\mathsf{Q}_{\parallel}\rbra{\mathcal{A}} = D$,\footnote{When $k$ is fixed, the query-depth complexity $\mathsf{Q}_{\parallel}\rbra{\cdot}$ is also known as the $k$-query complexity, denoted by $\mathsf{Q}^{k\parallel}\rbra{\cdot}$ in \cite{JMdW17}.} and the quantum time complexity is defined to be $\mathsf{T}\rbra{\mathcal{A}} = kD + \sum_{j=0}^D \mathsf{C}\rbra{F_j}$.
    The quantum query-depth complexity for QLSP is defined by
    \begin{align*}
        Q_{\parallel}^{\textup{sparse}}\rbra{N, \kappa, \varepsilon} & = \min \{\mathsf{Q}_{\parallel}\rbra{\mathcal{A}}:  
        \mathcal{A} \text{ solves $\text{sparse-QLSP}\rbra{N, \kappa, \varepsilon}$ and } 
        \text{$\mathsf{T}\rbra{\mathcal{A}} = \poly\rbra{\log\rbra{N}, \kappa, 1/\varepsilon}$} \}, \\
        Q_{\parallel}^{\textup{block}}\rbra{N, \kappa, \varepsilon} & = \min \{\mathsf{Q}_{\parallel}\rbra{\mathcal{A}} : 
        \mathcal{A} \text{ solves $\text{block-QLSP}\rbra{N, \kappa, \varepsilon}$ and} 
        \mathsf{T}\rbra{\mathcal{A}} = \poly\rbra{\log\rbra{N}, \kappa, 1/\varepsilon} \}.
    \end{align*}
    
    \subsection{Permutation chain}
    \label{sec:permchain}

    A formal definition of the permutation chain problem is given as follows.

    \begin{problem} [Permutation chain] \label{def:pc}
        Suppose that $N$ is a positive integer and $S_N$ is the symmetric group $S_N$ of degree $N$. 
        Let $q$ be a positive integer such that $q = \polylog\rbra{N}$. 
        Let $\pi_1, \pi_2, \dots, \pi_q \in S_N$ be ($0$-indexed) permutations of size $N$. 
        Suppose we are given the quantum oracle $\mathcal{O}_{\pi}$ such that for every $1 \leq j \leq 2q$ and $0 \leq x < N$, 
        \begin{equation} \label{eq:def-Opi}
        \mathcal{O}_{\pi} \ket{j, x, 0} = \begin{cases}
            \ket{j, x, \pi_j\rbra{x}}, & 1 \leq j \leq q, \\
            \ket{j, x, \pi_{j-q}^{-1}\rbra{x}}, & q + 1 \leq j \leq 2q.
        \end{cases}
        \end{equation}
        The goal is to compute $\Pi_q \rbra{0}$, where $\Pi_q = \pi_q \circ \dots \circ \pi_2 \circ \pi_1$.
        We use $\textup{PermChain}\rbra{N, q}$ to denote the problem with the chosen parameters. 
    \end{problem}

    Recently, the quantum hardness of the permutation chain in terms of circuit depth was shown in \cite{CCH+23} for the average case. 

    \begin{theorem} [Permutation chain, {\cite[Theorem 5.2]{CCH+23}}]
    \label{thm:permchain}
        Suppose that $N$ is a positive integer. 
        Let $q = O\rbra{\polylog\rbra{N}}$ and $k = O\rbra{\polylog\rbra{N}}$ be positive integers.
        For any quantum query algorithm $\mathcal{A}$ for $\textup{PermChain}\rbra{N, q}$ using $\floor{\rbra{q-1}/{2}}$ $k$-parallel queries to $\mathcal{O}_{\pi}$, we have
        \[
        \E_{\pi_1, \pi_2, \dots, \pi_q} \sbra*{ \Pr\sbra*{\mathcal{A} \text{ outputs } \Pi_q\rbra{0}} } = O\rbra*{q\sqrt{\frac k N}}.
        \]
        That is, in the average case, $\mathcal{A}$ finds the answer $\Pi_q\rbra{0}$ with probability $O\rbra{q\sqrt{k/N}}$.
    \end{theorem}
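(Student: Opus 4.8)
The plan is to prove \cref{thm:permchain} by a quantum hybrid (progress/reprogramming) argument formalizing the intuition that each layer of parallel queries can only ``advance along the chain'' by a constant number of steps, so that with fewer than roughly $q/2$ layers the algorithm never reaches the final edge and hence learns essentially nothing about $\Pi_q\rbra{0}$. Fix the $k$-parallel-query algorithm $\mathcal{A}$ with $d=\floor{\rbra{q-1}/2}$ layers, sample $\pi_1,\dots,\pi_q$ uniformly from $S_N$, and define the \emph{chain path} $x_0=0$, $x_j=\pi_j\rbra{x_{j-1}}$ for $1\le j\le q$, so the target is $x_q=\Pi_q\rbra{0}$. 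The elementary probabilistic fact used throughout is that, conditioned on $\pi_1,\dots,\pi_{j-1}$ and on whatever partial information about $\pi_j$ has been revealed without touching the input $x_{j-1}$, the value $x_j=\pi_j\rbra{x_{j-1}}$ is near-uniform over the still-undetermined outputs of $\pi_j$; in particular it hits any fixed set of size $k$ with probability $O\rbra{k/N}$ once $N$ is large compared with $q$ and the total number of queries.

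Next I would set up hybrid games $\mathrm{G}_0,\dots,\mathrm{G}_q$. In $\mathrm{G}_m$ the execution is the real one except that the ``chain edges'' $\rbra{x_{j-1},x_j}$ for $j>m$ are reprogrammed away: $\pi_{m+1},\dots,\pi_q$ are lazily (or compressed-oracle) sampled so that the pairs $\pi_j\rbra{x_{j-1}}=x_j$ and their inverses are decoupled from the actual path, and $x_{m+1},\dots,x_q$ are instead drawn uniformly, independent of $\mathcal{A}$'s view, only at the very end. Two endpoints must be checked: $\mathrm{G}_q$ is exactly the real game; and in $\mathrm{G}_0$ the answer $x_q$ is (up to an $O\rbra{\poly\rbra{q}/N}$-negligible correction from lazy re-sampling of permutations rather than functions) independent of the whole transcript, so $\Pr_{\mathrm{G}_0}\sbra{\mathcal{A}\text{ outputs }x_q}\le 1/N+O\rbra{\poly\rbra{q}/N}$.

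The heart of the argument is bounding the per-hybrid drift $\bigl|\Pr_{\mathrm{G}_m}\sbra{\text{success}}-\Pr_{\mathrm{G}_{m-1}}\sbra{\text{success}}\bigr|$. By a standard reprogramming lemma (in the style of the one-way-to-hiding lemma, or directly via the compressed permutation oracle), this drift is at most the total query amplitude that $\mathcal{A}$ places, over all $d$ layers, on the \emph{relevant} query for edge $m$: a forward query to $\pi_m$ with its input register on $x_{m-1}$, or an inverse query to $\pi_m^{-1}$ with its input register on $x_m$. Call this weight $w_m\rbra{\mathcal{A}}$. The recursive observation is that $w_m$ is itself small \emph{unless} $\mathcal{A}$ has, in strictly earlier layers, already placed non-negligible amplitude on edge $m-1$: because $x_{m-1}$ (resp.\ $x_m$) is near-uniform given everything learned without touching edge $m-1$, in any single layer the amplitude landing on the correct location is $O\rbra{\sqrt{k/N}}$ as long as the preceding edge has not been found. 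Unrolling, with only $d$ layers the algorithm can put non-negligible amplitude on at most the first $O\rbra{d}$ edges; since $d=\floor{\rbra{q-1}/2}<q-1$, the last edge $\rbra{x_{q-1},x_q}$ is never found, and summing the per-layer progress contributions $O\rbra{\sqrt{k/N}}$ over the $d$ layers gives overall success probability $1/N+O\rbra{d\sqrt{k/N}}=O\rbra{q\sqrt{k/N}}$. All the ``near-uniform'' statements hold in expectation over $\pi_1,\dots,\pi_q$ with the stated error, so linearity of expectation yields the average-case bound $\E_{\pi_1,\dots,\pi_q}\sbra{\Pr\sbra{\mathcal{A}\text{ outputs }\Pi_q\rbra{0}}}=O\rbra{q\sqrt{k/N}}$.

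The main obstacle is making this recursion rigorous in the quantum parallel model: the reprogramming lemma has to be applied layer-by-layer while bookkeeping how much amplitude on edge $m$ is ``licensed'' by prior amplitude on edge $m-1$, and one must verify that the bidirectional oracle (which also answers $\pi_j^{-1}$) does not let $\mathcal{A}$ make progress from the far end of the chain — this, together with the unavoidable constant-factor loss of a quantum hybrid argument, is what degrades the bound from $\sim q$ layers to $\sim q/2$. A clean way to organize the induction is to define, for each layer count $t$, a ``frontier'' $f(t)$ and a progress functional upper-bounding the total amplitude on all edges of index $\ge m$ after $t$ layers, prove a one-step inequality of the form $\text{(progress after }t{+}1) \le \text{(progress after }t) + O\rbra{\sqrt{k/N}}$ that is only active when $m$ is within $O\rbra{1}$ of $f(t)$, and then read off \cref{thm:permchain} by setting $t=d$ and $m=q$. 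Careful handling of the lazy/compressed sampling of \emph{permutations} (rather than functions) is also needed, but this is by now routine.
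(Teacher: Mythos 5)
First, a point of order: the paper does not prove \cref{thm:permchain} at all. It is imported as a black box from \cite[Theorem 5.2]{CCH+23}, and the present paper's own contribution begins strictly downstream of it (the encoding of $\textup{PermChain}\rbra{N,q}$ into a linear system and the reduction to QLSP). So there is no in-paper proof to compare your sketch against; the relevant comparison is with the external proof, which does proceed by a sequentiality/hybrid argument of the general shape you describe.

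Judged on its own terms, your proposal captures the right intuition but defers exactly the two steps that constitute the proof, and neither is as available off the shelf as you claim. (i) You bound the per-hybrid drift by ``a standard reprogramming lemma \ldots or directly via the compressed permutation oracle'' and later call the lazy/compressed sampling of \emph{permutations} ``by now routine.'' It is not: Zhandry's compressed-oracle framework is for uniformly random \emph{functions}, and adapting it to random permutations --- especially with inverse-query access, which $\mathcal{O}_\pi$ explicitly provides for $q+1 \leq j \leq 2q$ --- is a known technical obstruction that has required dedicated treatment; it cannot be cited as folklore. (ii) The frontier/licensing induction, which is the actual content of the sequentiality bound, is described but never established: you do not prove the one-step inequality, you do not quantify how amplitude on edge $m$ within a single \emph{parallel} layer (where the $k$ queries may be adaptively correlated in superposition) is controlled by amplitude previously placed on edge $m-1$, and you attribute the threshold $\floor{\rbra{q-1}/2}$ rather than $q-1$ to ``the unavoidable constant-factor loss of a quantum hybrid argument'' without deriving where the factor of $2$ comes from. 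As written, the proposal is an accurate statement of why the theorem should be true together with a to-do list of precisely the lemmas that make it true; it is not yet a proof.
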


    \section{Quantum Depth Lower Bound for Sparse-QLSP}
    \label{sec:sparse-QLSP}

    In this section, we will derive the quantum depth lower bound for sparse-QLSP stated as follows.

    \begin{theorem}
    \label{thm:main}
        For every constant $0 < \varepsilon < \frac{e^{-4}-e^{-6}}{1-e^{-6}} \approx 0.015$, we have
        \[
        \limsup_{\kappa \to +\infty} \frac{Q_{\parallel}^{\textup{sparse}}\rbra{3\kappa 2^{\frac{\kappa}{2}}, \kappa, \varepsilon}}{\kappa} \geq \frac 1 4.
        \]
    \end{theorem}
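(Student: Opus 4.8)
The plan is to reduce $\textup{PermChain}(N_0, q)$ to sparse-QLSP and then invoke the query-depth hardness of the permutation chain from \cref{thm:permchain}. Fix a chain length $q$, let the permutation degree be $N_0 = 2^{m}$ with $m = \Theta(\kappa)$ (exponentially large in $\kappa$), and, following the lower-bound construction of \cite{HHL09} but with permutations in place of circuit gates, build an Hermitian matrix $A = A(\pi_1, \dots, \pi_q)$ acting on a ``clock'' register $\cbra{0, 1, \dots, L-1}$ with $L = \Theta(q)$ tensored with a ``value'' register $\cbra{0, \dots, N_0 - 1}$. Schematically, $A$ is obtained by Hermitizing (and rescaling and diagonally shifting, to reach the problem's form $I/\kappa \preceq A \preceq I$) the non-normal operator $B = I - r\rbra{S \otimes P}$, where $S$ is the clock shift, $P$ applies $\pi_t$ to the value register when the clock reads $t \in \cbra{1, \dots, q}$ and the identity when $t > q$, and $r = 1 - \Theta(1/q)$; because $B$ is triangular, its ill-conditioning is entirely non-normal and grows only linearly in $q$, which is what keeps the condition number at $\Theta(q)$ rather than $\Theta(q^2)$ as in the naive Feynman-clock encoding. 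The key analytic facts about $A$, which I would establish by a direct spectral analysis of the gauged matrix, are: (i) after gauging away the permutations, $A$ reduces to a fixed matrix whose condition number is at most $2q + O(1)$, so one may take the QLSP parameter $\kappa = 2q + O(1)$, giving $q = (1/2 + o(1))\kappa$; (ii) the normalized solution $\ket{x} \propto A^{-1}\ket{0}$, measured in the computational basis on the value register, returns $\Pi_q(0)$ with probability at least an absolute constant $c_0 > 0$, because the clock part of $\ket x$ places constant weight on times $t \ge q$, where the value register has been carried to $\Pi_q(0)$; and (iii) $L$ and $m$ can be chosen (padding $A$ with an identity block if needed) so that the dimension is exactly $N = 3\kappa\,2^{\kappa/2}$.

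Next I would verify the two cost bounds that fix the constant. Since $A$ is $2$-sparse and both the value and the position of the unique off-diagonal entry of each row are determined by a single value $\pi_t^{\pm 1}(x)$, each of the sparse oracles $\mathcal{O}_s$ and $\mathcal{O}_A$ for $A$ can be implemented with exactly one query to $\mathcal{O}_\pi$ (recall $\mathcal{O}_\pi$ in \cref{def:pc} serves both $\pi_j$ and $\pi_j^{-1}$, and the only ``scratch'' required is the value register itself, which is retained, so no uncomputation query is needed). Hence, given any quantum algorithm $\mathcal{A}$ solving $\textup{sparse-QLSP}(N, \kappa, \varepsilon)$ with $\mathsf{T}(\mathcal{A}) = \poly\rbra{\log N, \kappa, 1/\varepsilon} = \poly(\kappa)$ and query-depth $D = \mathsf{Q}_{\parallel}(\mathcal{A})$, we obtain — by replacing each $k$-parallel QLSP-oracle layer with a single $O(k)$-parallel $\mathcal{O}_\pi$ layer and measuring the value register of $\mathcal{A}$'s output on input $A(\pi_1,\dots,\pi_q)$ — a quantum algorithm for $\textup{PermChain}(N_0, q)$ that uses $D$ parallel queries to $\mathcal{O}_\pi$ of width $k = \poly(\kappa) = \polylog(N_0)$ and outputs $\Pi_q(0)$ with probability at least $\tfrac{2}{3}\rbra{c_0 - O(\varepsilon)}$; for $\varepsilon$ below the stated threshold this is a fixed positive constant (the $\tfrac23$ from the QLSP success guarantee, the $O(\varepsilon)$ from $\Abs{\ket{\tilde x} - \ket x} \le \varepsilon$ together with the fact that computational-basis outcome statistics are $1$-Lipschitz in trace distance). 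Here I use that $A$ is a valid sparse-QLSP instance for every choice of $\pi_1, \dots, \pi_q$, so $\mathcal{A}$'s guarantee applies to each of them and hence to the average; in particular no amplification is needed.

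Now suppose, for contradiction, that $D \le \floor{(q-1)/2}$. Then \cref{thm:permchain}, with its parameter $N$ instantiated to $N_0$, bounds the average success probability of the algorithm just constructed by $O\rbra{q\sqrt{k/N_0}} = \poly(\kappa)\cdot 2^{-\Omega(\kappa)} \to 0$, contradicting the fixed positive lower bound above once $\kappa$ is large enough. Therefore $D \ge \floor{(q-1)/2} + 1 \ge q/2$, that is, $Q_{\parallel}^{\textup{sparse}}\rbra{3\kappa\,2^{\kappa/2}, \kappa, \varepsilon} \ge q/2$. Dividing by $\kappa$ and using $q = (1/2 + o(1))\kappa$ from (i), the ratio $Q_{\parallel}^{\textup{sparse}}\rbra{3\kappa\,2^{\kappa/2}, \kappa, \varepsilon}/\kappa \ge q/(2\kappa)$ tends to $1/4$ along the sequence of $\kappa$'s realized by the construction, whence $\limsup_{\kappa \to +\infty} Q_{\parallel}^{\textup{sparse}}\rbra{3\kappa\,2^{\kappa/2}, \kappa, \varepsilon}/\kappa \ge 1/4$.

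The main obstacle I expect is the spectral analysis behind facts (i) and (ii): one must pin the singular values of the gauged operator $I - rS$ on a path of length $L$ down tightly enough to certify condition number $2q + O(1)$ rather than merely $O(q)$ — so that the constant is genuinely $1/4$ and not just $\Omega(1)$ — and then control precisely how much weight the solution $A^{-1}\ket 0$ places on clock times $t \ge q$ with value register $\Pi_q(0)$, propagating the $\varepsilon$-error of the QLSP output all the way to the final measurement statistics. This is exactly where the threshold $\varepsilon < \rbra{e^{-4} - e^{-6}}/\rbra{1 - e^{-6}}$ originates and where most of the work goes; the remaining ingredients — the single-query oracle simulations, the amplification-free reduction, and the limit manipulation — are routine once this analysis is in hand.
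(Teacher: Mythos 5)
Your proposal follows essentially the same route as the paper: an HHL-style clock encoding of the permutation chain into a $2$-sparse Hermitian dilation with condition number $\approx 2q$, single-query simulation of $\mathcal{O}_s$ and $\mathcal{O}_A$ from $\mathcal{O}_\pi$, and the $\floor{(q-1)/2}$ parallel-query hardness of \cref{thm:permchain} to force depth $\geq q/2 \approx \kappa/4$; the paper's concrete choice is $A \propto \ket{0}\bra{1}\otimes\rbra{I - e^{-1/q}P} + \mathrm{h.c.}$ with a cyclic clock of length $3q$ whose last third ``unwinds'' the permutations so that $P^{3q}=I$, and your triangular path-clock variant is an equally workable alternative. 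The one detail to drop is the parenthetical ``diagonally shifting'' to force $I/\kappa \preceq A \preceq I$: a shift changes $A^{-1}\ket{0}$ and would destroy the measurement analysis, and indeed the paper's $A$ has symmetric spectrum and only satisfies $\Abs{A}\leq 1$ and $\Abs{A^{-1}}\leq\kappa$, which is all any QLSP solver actually uses.
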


    In other words, \cref{thm:main} means that for every $\delta > 0$, there is a large enough real number $\kappa$ such that 
    \[
        Q_{\parallel}^{\textup{sparse}}\rbra{3\kappa 2^{\frac{\kappa}{2}}, \kappa, \varepsilon} \geq \rbra*{\frac 1 4 - \delta} \kappa.
    \]
    Taking $\delta = 0.001$ will produce the introductory \cref{thm:intro}.
    We will prove \cref{thm:main} in the following subsections. 
    Here, we sketch the main idea of the proof.
    \begin{enumerate}
        \item We first encode the problem $\textup{PermChain}\rbra{N, q}$ as a system of linear equations $A\ket{x} = \ket{0}$ in \cref{sec:encode-permchain-by-QLSP}. 
        \item Then, we construct the quantum oracles $\mathcal{O}_s$ and $\mathcal{O}_A$ (in the sparse input model) for $A$ using the quantum oracle $\mathcal{O}_{\pi}$ for the permutation chain in \cref{sec:construct-access}.
        \item Finally, we reduce $\textup{PermChain}\rbra{N, q}$ to \textup{QLSP} with certain parameters in \cref{sec:reduction}.
    \end{enumerate}

    \subsection{Encoding permutation chain by linear systems}
    \label{sec:encode-permchain-by-QLSP}

    Let $\pi_1, \pi_2, \dots, \pi_q \in S_N$ be ($0$-indexed) permutations of size $N$. 
    Then, we consider the system of linear equations described by an Hermitian matrix $A$ such that
    \begin{equation}
    \label{eq:def-A}
    \begin{aligned}
    A = \frac{1}{1+e^{-\frac 1 q}} \Bigg( \ket{0}\bra{1} \otimes \rbra*{I - e^{-\frac{1}{q}} P} 
    + \ket{1}\bra{0} \otimes \rbra*{I - e^{-\frac{1}{q}} P^{-1}} \Bigg),
    \end{aligned}
    \end{equation}
    where $P$ is a $3qN \times 3qN$ permutation matrix defined by
    \begin{align*}
        P & = \sum_{j=1}^q \ket{j} \bra{j-1} \otimes \sum_{x=0}^{N-1} \ket{\pi_j\rbra{x}} \bra{x} 
        + \sum_{j=q+1}^{2q} \ket{j} \bra{j-1} \otimes I \\ 
        & \qquad + \sum_{j=2q+1}^{3q} \ket{j \bmod 3q} \bra{j-1} \otimes \sum_{x=0}^{N-1} \ket{\pi_{3q+1-j}^{-1}\rbra{x}} \bra{x}.
    \end{align*}
    The construction of $A$ in \cref{eq:def-A} is inspired by \cite{HHL09} where they used linear systems to describe the simulation of quantum computation in the quantum circuit model.
    For comparison, our construction encodes a precise problem --- $\textup{PermChain}\rbra{N, q}$, which is not directly related to quantum computing.
    It can be shown that $A$ satisfies the following basic properties. 
    \begin{lemma}
    \label{lemma:basic-A}
        Let $A$ be defined by \cref{eq:def-A}.
        Then, $A$ is a $2$-sparse $6qN \times 6qN$ Hermitian matrix.
        Moreover, for every $\delta > 0$, $\rbra{2+\delta}^{-1}q^{-1} I \leq A \leq I$ for sufficiently large $q \geq 1$. 
    \end{lemma}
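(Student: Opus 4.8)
The plan is to read off the three structural properties directly from the $2\times 2$ block form of $A$ and then to reduce the spectral bound to the singular values of a single off-diagonal block via the unitarity of $P$. Rewriting \cref{eq:def-A} and using $P^{-1}=P^\dagger$ (since $P$ is a permutation matrix), I obtain $A = \frac{1}{1+e^{-1/q}}\begin{pmatrix} 0 & M \\ M^\dagger & 0\end{pmatrix}$ with $M = I - e^{-1/q}P$. Hermiticity is then immediate because the two off-diagonal blocks are mutual adjoints. The dimension count is also immediate: the qubit register $\cbra{\ket{0},\ket{1}}$ contributes a factor $2$ and $P$ acts on $\mathbb{C}^{3qN}$, so $A$ is $6qN\times 6qN$. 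For $2$-sparsity I would note that each row of $M = I - e^{-1/q}P$ has at most two nonzero entries (one from $I$ on the diagonal, one from the permutation $P$), and the block structure keeps each such row within a single off-diagonal block; hence every row and column of $A$ has at most two nonzero entries.

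For the spectral bound I would diagonalize through $M$, exploiting that $A$ is Hermitian. Since $P$ is unitary with eigenvalues $e^{i\theta}$, the operator $M^\dagger M = (1+e^{-2/q})I - e^{-1/q}(P+P^\dagger)$ is simultaneously diagonalized with $P$, so the singular values of $M$ are exactly $\abs{1-e^{-1/q}e^{i\theta}}$ as $e^{i\theta}$ ranges over the spectrum of $P$. These lie in $\sbra{1-e^{-1/q},\, 1+e^{-1/q}}$: the upper bound $1+e^{-1/q}$ holds a priori, and the lower value $1-e^{-1/q}$ is attained because every permutation matrix has the eigenvalue $1$ (the all-ones vector on each cycle), so $\theta=0$ occurs. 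The eigenvalues of the Hermitian matrix $A$ are then $\pm\sigma/(1+e^{-1/q})$ over the singular values $\sigma$ of $M$, whence $\Abs{A} = \sigma_{\max}(M)/(1+e^{-1/q}) \le 1$, while the least singular value is $\sigma_{\min}(A) = \frac{1-e^{-1/q}}{1+e^{-1/q}} = \tanh\rbra{\frac{1}{2q}}$.

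Finally I would translate this into the stated conditioning. The bound $\Abs{A}\le 1$ yields the upper Loewner inequality $A\preceq I$ directly, since $A$ is Hermitian. The lower conditioning is the Loewner inequality $A^2 \succeq \sigma_{\min}(A)^2\, I$, that is $\sigma_{\min}(A)\ge(2+\delta)^{-1}q^{-1}$: using $\tanh(x)/x\to 1$ as $x\to 0$, for each fixed $\delta>0$ we have $\tanh\rbra{\frac{1}{2q}} \ge \frac{1}{(2+\delta)q}$ once $q$ is large enough, which together with $\Abs{A}\le 1$ is exactly the condition-number normalization $I/\kappa\preceq A\preceq I$ with $\kappa\le(2+\delta)q$ required in the QLSP problem. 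The one point needing care is the lower singular-value bound: it rests on identifying $\theta=0$ (the fixed eigenvalue $1$ shared by every permutation matrix) as the minimizer of $\abs{1-e^{-1/q}e^{i\theta}}$, so that no cancellation arising from the chain structure of $P$ can push $\sigma_{\min}(M)$ below $1-e^{-1/q}$. I would settle this by the elementary monotonicity of $\abs{1-e^{-1/q}e^{i\theta}}^2 = 1+e^{-2/q}-2e^{-1/q}\cos\theta$ in $\cos\theta$, which I expect to be the main obstacle of the proof.
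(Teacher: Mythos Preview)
Your argument is correct, but it proceeds along a different line than the paper's. The paper never diagonalizes $P$: it bounds $\Abs{A}\le 1$ by the triangle inequality $\Abs{I-e^{-1/q}P}\le 1+e^{-1/q}$, and for the lower bound it writes down $A^{-1}$ explicitly via the Neumann series $\rbra{I-e^{-1/q}P}^{-1}=\sum_{k\ge 0}e^{-k/q}P^k$ and bounds $\Abs{A^{-1}}\le\rbra{1+e^{-1/q}}/\rbra{1-e^{-1/q}}$. Your spectral route through the unitarity of $P$ gives more: you obtain the exact extremal singular values of $M$ and hence the exact condition number $\Abs{A}\Abs{A^{-1}}=\coth\rbra{1/(2q)}$, whereas the paper only gets the same quantity as an upper bound. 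On the other hand, the paper's approach produces the closed form for $A^{-1}$ in \cref{eq:A-inv}, which is reused in the proof of \cref{lemma:measure-psi}; your method does not yield that formula as a byproduct.

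One small point: the ``care'' you flag at the end is not really a difficulty. For the lemma you only need the universal inequality $\abs{1-e^{-1/q}e^{i\theta}}\ge 1-e^{-1/q}$, which your monotonicity-in-$\cos\theta$ computation already gives for \emph{every} $\theta$, irrespective of whether $\theta=0$ actually lies in the spectrum of $P$. Observing that $1$ is always an eigenvalue of a permutation matrix is what pins down $\sigma_{\min}(A)=\tanh\rbra{1/(2q)}$ with equality, but the lemma only asks for the inequality $\sigma_{\min}(A)\ge (2+\delta)^{-1}q^{-1}$, so that extra step is informative but not load-bearing.
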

    \begin{proof}
    It is easy to see that $A$ is a $2$-sparse $6qN \times 6qN$ Hermitian matrix. 
    Then, we consider how to bound $\Abs{A}$ and $\Abs{A^{-1}}$. 
    As $\Abs{A - B} \leq \Abs{A} + \Abs{B}$, we have
    \begin{align*}
        \Abs{A}
        & \leq \frac{1}{1+e^{-\frac 1 q}} \max\cbra*{\Abs*{I - e^{-\frac{1}{q}} P}, \Abs*{I - e^{-\frac{1}{q}} P^{-1}}} \\
        & \leq \frac{1}{1+e^{-\frac 1 q}} \rbra*{\Abs{I} + e^{-\frac{1}{q}} \Abs{P}} \\
        & = \frac{1}{1+e^{-\frac 1 q}} \rbra*{1 + e^{-\frac 1 q} } = 1.
    \end{align*}
    
    It can be verified that 
    \begin{equation}
    \label{eq:A-inv}
    \begin{aligned}
    A^{-1} = \rbra*{1+e^{-\frac 1 q}} \Bigg( \ket{0}\bra{1} \otimes \rbra*{I - e^{-\frac{1}{q}} P^{-1}}^{-1} + \ket{1}\bra{0} \otimes \rbra*{I - e^{-\frac{1}{q}} P}^{-1} \Bigg).
    \end{aligned}
    \end{equation}
    Then, 
    \begin{align*}
        \Abs*{A^{-1}} 
        & \leq \rbra*{1+e^{-\frac 1 q}} \max\Bigg\{ \Abs*{\rbra*{I - e^{-\frac{1}{q}} P^{-1}}^{-1}},
        \Abs*{\rbra*{I - e^{-\frac{1}{q}} P}^{-1}} \Bigg\} \\
        & \leq \rbra*{1+e^{-\frac 1 q}} \sum_{k=0}^{\infty} e^{-\frac k q} \\
        & = \frac{1+e^{-\frac 1 q}}{1-e^{-\frac 1 q}}.
    \end{align*}
    For every $\delta > 0$, it can be seen that for sufficiently large $q \geq 1$, we have
    \[
        \Abs*{A^{-1}} \leq \frac{1+e^{-\frac 1 q}}{1-e^{-\frac 1 q}} \leq \rbra{2+\delta} q.
    \]
    Therefore, $I/\kappa \leq A \leq I$ for $\kappa = \Abs{A} \Abs{A^{-1}} = \rbra{2+\delta}q$.
    \end{proof}
    
    Now we consider the solution to $\textup{QLSP}\rbra{6qN, \rbra{2+\delta}q, \varepsilon}$ with respect to $A$, which is
    \[
    \ket{\psi} = \frac{A^{-1}\ket{\bar{0}}}{\Abs{A^{-1}\ket{\bar{0}}}}, 
    \]
    where we denote $\ket{\bar{0}} = \ket{0}\ket{0}\ket{0}$ to avoid ambiguity and each $\ket{0}$ corresponds to one of the three subsystems of $A$ defined by \cref{eq:def-A}.
    After measuring $\ket{\psi}$ on the computational basis, the outcome satisfies the following properties. 
    
    \begin{lemma}
    \label{lemma:measure-psi}
        Let $\rbra{b, j, x}$ be the measurement outcome of $\ket{\psi}$ on the computational basis.
        Then, 
        \[
        \Pr\sbra{ q+1 \leq j \leq 2q } \geq \frac{e^{-4}-e^{-6}}{1-e^{-6}} > 0.015.
        \]
        Moreover, 
        \[
        x = \begin{cases}
            \Pi_j\rbra{0}, & 0 \leq j \leq q, \\
            \Pi_q\rbra{0}, & q+1 \leq j \leq 2q, \\
            \Pi_{3q+1-j}\rbra{0}, & 2q+1 \leq j < 3q.
        \end{cases}
        \]
        Here, $\Pi_j = \pi_j \circ \dots \circ \pi_2 \circ \pi_1$ for $0 \leq j \leq q$.
    \end{lemma}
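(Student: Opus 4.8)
The plan is to understand the vector $A^{-1}\ket{\bar 0}$ explicitly using the closed form in \cref{eq:A-inv}, then read off the distribution of the measurement outcome. Since $\ket{\bar 0}=\ket 0\ket 0\ket 0$ lies in the $\ket 0$ part of the first (two-dimensional) register, only the $\ket 1\bra 0$ block of $A^{-1}$ acts on it, so
\[
A^{-1}\ket{\bar 0}=\rbra*{1+e^{-1/q}}\,\ket 1\otimes\rbra*{I-e^{-1/q}P}^{-1}\ket 0\ket 0,
\]
where the $\ket 0\ket 0$ now refers to the index $(j,x)=(0,0)$ in the $3qN$-dimensional register. Expanding the Neumann series, $\rbra*{I-e^{-1/q}P}^{-1}\ket{0,0}=\sum_{k\ge 0}e^{-k/q}P^k\ket{0,0}$. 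The key combinatorial step is to compute $P^k\ket{0,0}$: since $P$ advances the $j$-index by one (mod $3q$) and simultaneously applies $\pi_j$, $I$, or $\pi_{3q+1-j}^{-1}$ on the $x$-register depending on which of the three blocks $j$ falls in, one checks by induction that $P^k\ket{0,0}=\ket{k\bmod 3q}\otimes\ket{y_k}$ where $y_k$ is exactly the $x$-value described in the statement — namely $\Pi_k(0)$ for $0\le k\le q$, then $\Pi_q(0)$ for $q\le k\le 2q$ (the identity block leaves it fixed), and then $\Pi_{3q+1-k}(0)$ for $2q\le k<3q$ as the inverse permutations peel off the chain. This immediately gives the second ("moreover") part of the lemma, since every basis vector appearing in $A^{-1}\ket{\bar 0}$ has $b=1$, $j=k\bmod 3q$, and $x=y_k$. (One should note the boundary overlaps at $j=q$ and $j=2q$ are consistent, as $\Pi_q(0)$ agrees under both cases.)

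For the probability bound, the outcome $j$ equals $k\bmod 3q$, and distinct $k$'s in $\{0,1,\dots,3q-1\}$ give distinct $j$'s, so the amplitudes do not interfere within a single "pass"; however $k$ ranges over all of $\mathbb N$, so $\Pr[j=j_0]\propto\bigl(\sum_{k\equiv j_0}e^{-k/q}\bigr)^2=e^{-2j_0/q}/(1-e^{-6})^2$ after summing the geometric series in $k$ with ratio $e^{-3q\cdot(1/q)}=e^{-3}$ — wait, more carefully each residue class contributes $e^{-j_0/q}\sum_{m\ge 0}e^{-3m}=e^{-j_0/q}/(1-e^{-3})$, so the unnormalized weight on outcome $j_0$ is $e^{-2j_0/q}/(1-e^{-3})^2$. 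Summing over $j_0=0,\dots,3q-1$ gives the normalization $\propto\sum_{j_0=0}^{3q-1}e^{-2j_0/q}=(1-e^{-6})/(1-e^{-2/q})$. Therefore
\[
\Pr[q+1\le j\le 2q]=\frac{\sum_{j_0=q+1}^{2q}e^{-2j_0/q}}{\sum_{j_0=0}^{3q-1}e^{-2j_0/q}}
=\frac{e^{-2(q+1)/q}\,\frac{1-e^{-2(q)/q}\cdot(\ )}{\ }}{\ },
\]
and taking $q\to\infty$ the Riemann-sum limit of $\sum_{j_0}e^{-2j_0/q}$ over the intervals $[q,2q]$ and $[0,3q]$ yields the ratio $\frac{\int_1^2 e^{-2t}\,dt}{\int_0^3 e^{-2t}\,dt}=\frac{e^{-2}-e^{-4}}{1-e^{-6}}$; but one must instead work out the exact finite-$q$ value, which after simplifying the two geometric sums is $\frac{e^{-2(q+1)/q}-e^{-2(2q+1)/q}}{1-e^{-6}}\cdot\frac{1}{1-\text{(corrections)}}$, and this is monotone in $q$ with infimum $\frac{e^{-4}-e^{-6}}{1-e^{-6}}>0.015$. (The constant in the lemma, $\frac{e^{-4}-e^{-6}}{1-e^{-6}}$, matches taking the interval $[2,3]$ rather than $[1,2]$ in the limit, so the bookkeeping of which $j_0$-range corresponds to the identity block, including the off-by-one from $q+1$ versus $q$, needs to be done with care.)

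The main obstacle I anticipate is purely this index bookkeeping: getting the three ranges of $k$ exactly right (in particular the mod-$3q$ wraparound sending $j=3q$ back to $j=0$, and the two overlap points $j=q,2q$), and then evaluating the finite geometric sums carefully enough to land on the stated constant $\frac{e^{-4}-e^{-6}}{1-e^{-6}}$ rather than something slightly different, while confirming monotonicity in $q$ so that the bound holds for all sufficiently large $q$ (or all $q\ge 1$). The linear-algebra content — verifying \cref{eq:A-inv} and the Neumann expansion — is routine given that $\Abs{e^{-1/q}P}<1$, which was already established in the proof of \cref{lemma:basic-A}.
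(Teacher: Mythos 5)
Your proposal is correct and follows the paper's proof essentially step for step: expand $A^{-1}\ket{\bar{0}}$ via \cref{eq:A-inv} and the Neumann series, use $P^k\ket{0}\ket{0}=\ket{k \bmod 3q}\ket{x_{k\bmod 3q}}$ so that amplitudes within each residue class sum coherently with ratio $e^{-3}$, and take the ratio of two geometric sums; the exact value you leave unfinished is $\frac{(e^{-2}-e^{-4})e^{-2/q}}{1-e^{-6}}$, which is indeed increasing in $q$ and equals the stated constant $\frac{e^{-4}-e^{-6}}{1-e^{-6}}$ at $q=1$, confirming your claimed infimum. (One shared quibble: in the range $2q+1\leq j<3q$ the index should be $\Pi_{3q-j}(0)$ rather than $\Pi_{3q+1-j}(0)$, since e.g.\ $x_{2q+1}=\pi_q^{-1}\Pi_q(0)=\Pi_{q-1}(0)$; this off-by-one appears in the lemma statement itself and is reproduced by both you and the paper, but it is immaterial because only the middle range and the periodicity $P^{3q}\ket{0}\ket{0}=\ket{0}\ket{0}$ are used downstream.)
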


    \begin{proof}
        It is easy to verify the relation between $x$ and $j$. Let $x_j$ be the value of $x$ corresponding to $j$ defined by
        \[
        x_j = \begin{cases}
            \Pi_j\rbra{0}, & 0 \leq j \leq q, \\
            \Pi_q\rbra{0}, & q+1 \leq j \leq 2q, \\
            \Pi_{3q+1-j}\rbra{0}, & 2q+1 \leq j < 3q.
        \end{cases}
        \]
        We only have to compute $\Pr\sbra{ q+1 \leq j \leq 2q }$ as follows. 
        Let 
        \[
        M = \sum_{j=q+1}^{2q} I \otimes \ket{j}\bra{j} \otimes I
        \]
        be the projector onto the subspace where the second register $j$ is between $q+1$ and $2q$ (inclusive). 
        Then,
        \[
        \Pr\sbra*{ q+1 \leq j \leq 2q } = \frac{\Abs{ M A^{-1} \ket{\bar{0}} }^2}{\Abs{A^{-1}\ket{\bar{0}}}^2}.
        \]
        To this end, we first compute $A^{-1}\ket{\bar{0}}$ by \cref{eq:A-inv}:
        \begin{align*}
        A^{-1} \ket{0}\ket{0}\ket{0} 
        & = \rbra*{1+e^{-\frac 1 q}} \Bigg( \ket{0}\bra{1} \otimes \rbra*{I - e^{-\frac{1}{q}} P^{-1}}^{-1} 
        + \ket{1}\bra{0} \otimes \rbra*{I - e^{-\frac{1}{q}} P}^{-1} \Bigg) \ket{0} \ket{0} \ket{0} \\
        & = \rbra*{1+e^{-\frac 1 q}} \ket{1} \otimes \rbra*{I - e^{-\frac{1}{q}} P}^{-1} \ket{0} \ket{0} \\
        & = \rbra*{1+e^{-\frac 1 q}} \ket{1} \otimes \sum_{k=0}^{\infty} e^{-\frac{k}{q}} P^{k} \ket{0} \ket{0} \\
        & = \rbra*{1+e^{-\frac 1 q}} \ket{1} \otimes \sum_{k=0}^{\infty} e^{-\frac{k}{q}} \ket{k \bmod 3q} \ket{x_{k \bmod 3q}} \\
        & = \rbra*{1+e^{-\frac 1 q}} \ket{1} \otimes \sum_{j=0}^{3q-1} \sum_{\ell=0}^{\infty} e^{-\frac{j + 3q\ell}{q}} \ket{j} \ket{x_{j}} \\
        & = \rbra*{1+e^{-\frac 1 q}} \ket{1} \otimes \sum_{j=0}^{3q-1} \frac{e^{-\frac j q}}{1 - e^{-3}} \ket{j} \ket{x_{j}}.
        \end{align*}
    With this, we have
    \begin{align*}
        \Abs*{A^{-1}\ket{\bar{0}}}^2 & = \sum_{j=0}^{3q-1} \rbra*{\frac{\rbra*{1+e^{-\frac 1 q}}e^{-\frac j q}}{1 - e^{-3}}}^2 \\
        & = \rbra*{\frac{1+e^{-\frac 1 q}}{1 - e^{-3}}}^2 \frac{1 - e^{-6}}{1 - e^{-\frac 2 q}},
    \end{align*}
    \begin{align*}
    \Abs*{MA^{-1}\ket{\bar{0}}}^2 
    & = \sum_{j=q+1}^{2q} \rbra*{\frac{\rbra*{1+e^{-\frac 1 q}}e^{-\frac j q}}{1 - e^{-3}}}^2 \\
    & = \rbra*{\frac{1+e^{-\frac 1 q}}{1 - e^{-3}}}^2 \frac{e^{-\frac {2\rbra{q+1}} q} \rbra*{1 - e^{-2}}}{1 - e^{-\frac 2 q}}.
    \end{align*}
    Then, 
    \[
    \frac{\Abs{ M A^{-1} \ket{\bar{0}} }^2}{\Abs{A^{-1}\ket{\bar{0}}}^2}
    = \frac{e^{-2} - e^{-4}}{1 - e^{-6}} e^{-\frac 2 q} \geq \frac{e^{-4} - e^{-6}}{1 - e^{-6}}. 
    \]
    \end{proof}

    By \cref{lemma:measure-psi}, we can obtain the solution $\Pi_q\rbra{0}$ to $\textup{PermChain}\rbra{N, q}$ with a constant probability by measuring $\ket{\psi}$ on the computational basis with outcome $\rbra{b, j, x}$: it holds with probability $> 0.015$ that $q+1 \leq j \leq 2q$ and thus $x = x_j = \Pi_q\rbra{0}$.

    \subsection{Constructing quantum query oracles}
    \label{sec:construct-access}

    As shown in \cref{lemma:basic-A}, $A$ is a $2$-sparse $6qN \times 6qN$ Hermitian matrix.
    In this subsection, we will explicitly construct the quantum oracles $\mathcal{O}_s$ and $\mathcal{O}_A$ for quantum query access to $A$.

    \begin{lemma}
    \label{lemma:OsOabyOpi}
        Let $\mathcal{O}_s$ and $\mathcal{O}_A$ be quantum oracles, defined by \cref{eq:def-Os} and \cref{eq:def-OA}, respectively, for sparse matrix $A$ defined by \cref{eq:def-A}.
        Then, each of $\mathcal{O}_s$ and $\mathcal{O}_A$ can be implemented using $1$ query to $\mathcal{O}_{\pi}$ defined by \cref{eq:def-Opi}.
    \end{lemma}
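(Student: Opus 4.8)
The plan is to unpack the two oracle definitions against the explicit structure of $A$ in \cref{eq:def-A} and verify that each nontrivial entry-access or index-access can be answered by a single application of $\mathcal{O}_{\pi}$, possibly combined with oracle-independent reversible arithmetic. First I would fix notation for the index set: a row (or column) index of $A$ is a triple $(b,j,x)$ with $b\in\{0,1\}$, $0\le j<3q$, and $0\le x<N$, so that $A = \sum_{b,j,x} \dots$ acts on $\mathbb{C}^2\otimes\mathbb{C}^{3q}\otimes\mathbb{C}^N$. From \cref{eq:def-A}, $A$ is block-off-diagonal in $b$: the only nonzero entries go between $b=0$ and $b=1$. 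Within the $\ket{0}\bra{1}$ block the matrix is $\frac{1}{1+e^{-1/q}}(I - e^{-1/q}P)$ and within the $\ket{1}\bra{0}$ block it is $\frac{1}{1+e^{-1/q}}(I - e^{-1/q}P^{-1})$. Since $P$ is a permutation matrix, each row of $I - e^{-1/q}P$ has exactly the diagonal entry $\frac{1}{1+e^{-1/q}}$ and one off-diagonal entry $\frac{-e^{-1/q}}{1+e^{-1/q}}$ located at the column $P$ maps that row-index to (or its preimage, in the $P^{-1}$ block); this is precisely the $2$-sparsity from \cref{lemma:basic-A}.

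For $\mathcal{O}_s$: given a row index $(b,j,x)$ and a sparsity label $k\in\{1,2\}$, I need to output the column index of the $k$-th nonzero entry. One of the two nonzero columns always is $(1-b, j, x)$ — the "diagonal-in-$P$" term — which is computed with no oracle call. The other is the "$P$-term": if $b=1$ (so we are in the $\ket{1}\bra{0}\otimes(I-e^{-1/q}P^{-1})$ block, acting as a row operation), the other column is $(0, j', x')$ where $(j',x')$ is determined by where $P^{-1}$ sends $(j,x)$, which by the definition of $P$ is: decrement $j$ cyclically and apply $\pi_{j}$, identity, or $\pi^{-1}_{3q+1-j}$ depending on which of the three ranges $j$ falls into; conversely for $b=0$ we apply $P$ itself. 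In every case exactly one evaluation of some $\pi_j$ or $\pi_j^{-1}$ is needed, and by \cref{eq:def-Opi} the single oracle $\mathcal{O}_{\pi}$ supplies both $\pi_j(\cdot)$ (indices $1\le j\le q$) and $\pi_j^{-1}(\cdot)$ (indices $q+1\le j\le 2q$). So $\mathcal{O}_s$ costs $1$ query to $\mathcal{O}_{\pi}$, with the range-selection logic and cyclic index shift done by oracle-free gates. For $\mathcal{O}_A$: given $(b,j,x)$, $(b',j',x')$ and a zero scratch register, output $A_{(b,j,x),(b',j',x')}$. The value is $0$ unless $b\ne b'$; it is $\frac{1}{1+e^{-1/q}}$ if $(j',x')=(j,x)$; and it is $\frac{-e^{-1/q}}{1+e^{-1/q}}$ if $(j',x')$ equals the image of $(j,x)$ under $P$ or $P^{-1}$ as appropriate. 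Deciding the last case requires computing that image and comparing — again exactly one $\mathcal{O}_{\pi}$ call — after which the constant (one of three known bit-strings) is written into the scratch register by oracle-free gates.

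The one subtlety I would be careful with is \emph{reversibility}: $\mathcal{O}_{\pi}$ computes $\pi_j(x)$ into a fresh register $\ket{j,x,0}\mapsto\ket{j,x,\pi_j(x)}$, so when I use it to realize the \emph{in-place} permutation action needed inside $\mathcal{O}_s$ (which sends $\ket{j,x}$ to $\ket{j',\pi_j(x)}$ with the old $x$ erased), I must uncompute $x$ using the inverse branch of $\mathcal{O}_{\pi}$. But uncomputation with the inverse-permutation branch uses $\mathcal{O}_{\pi}^{-1}$ or another application of $\mathcal{O}_{\pi}$ — this would naively cost a second query. The way around this (and the cleanest way to state the lemma) is to observe that we only need $\mathcal{O}_s$ and $\mathcal{O}_A$ in their standard "non-erasing" forms: $\mathcal{O}_s\ket{j,k}=\ket{j,l_{j,k}}$ keeps the row index $j$ and writes $l_{j,k}$ into a register that need not start at $0$ — so in fact $\mathcal{O}_s$ as defined in \cref{eq:def-Os} writes the column index alongside, not over, the row, and the single $\mathcal{O}_{\pi}$ evaluation writing $\pi_j(x)$ into that register (together with oracle-free handling of the $b$- and $j$-components) suffices with no uncomputation. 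I would therefore structure the proof as: (1) spell out the explicit nonzero pattern and entries of $A$ from \cref{eq:def-A}; (2) give the circuit for $\mathcal{O}_s$, pointing out that the "$P$-neighbor" is produced by one $\mathcal{O}_{\pi}$ call plus oracle-free index arithmetic and range selection, the "diagonal-neighbor" being free; (3) give the circuit for $\mathcal{O}_A$, where one $\mathcal{O}_{\pi}$ call computes the candidate $P$-image, a comparison (oracle-free) decides which of the three constants to output. The main obstacle — and it is a presentational rather than mathematical one — is making the reversible index bookkeeping across the three $j$-ranges of $P$ precise enough to be convincing while keeping the query count provably equal to $1$; once the "write-into-fresh-register" form of the oracles is used, this reduces to routine reversible-circuit construction on oracle-independent data.
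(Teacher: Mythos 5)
Your proposal is correct and follows essentially the same route as the paper: write out the explicit two-nonzero-entry structure of each row/column of $A$ coming from the action of $P$ and $P^{-1}$ across the three $j$-ranges, and observe that the single off-diagonal neighbour requires exactly one evaluation of some $\pi_j$ or $\pi_j^{-1}$, both of which $\mathcal{O}_{\pi}$ supplies in a single query. Your treatment of the reversibility/uncomputation subtlety is in fact more explicit than the paper's proof (which just displays the columns $A\ket{b}\ket{j}\ket{x}$ and asserts the construction); only double-check the direction bookkeeping --- for row $(1,j,x)$ the off-diagonal entry of the $I-e^{-1/q}P^{-1}$ block sits at column $P(j,x)$ rather than $P^{-1}(j,x)$ --- which does not affect the query count.
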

    \begin{proof}
        The matrix representation of $A$ by \cref{eq:def-A} is 
        \[
        A = \frac{1}{1+e^{-\frac 1 q}} \begin{bmatrix}
            0 & I - e^{-\frac 1 q} P \\
            I - e^{-\frac 1 q} P^{-1} & 0
        \end{bmatrix}.
        \]
        We first consider how to compute the entries in $I - e^{-\frac 1 q} P$, and the case for $I - e^{-\frac 1 q} P^{-1}$ is similar. 
        For $0 \leq j < 3q$ and $0 \leq x < N$, we have
        \[
        P \ket{j}\ket{x} = \begin{cases}
            \ket{\rbra{j+1} \bmod 3q} \ket{\pi_{j+1}\rbra{x}}, & 0 \leq j < q, \\
            \ket{\rbra{j+1} \bmod 3q} \ket{x}, & q \leq j < 2q, \\ 
            \ket{\rbra{j+1} \bmod 3q} \ket{\pi^{-1}_{3q-j}\rbra{x}}, & 2q \leq j < 3q. 
        \end{cases}
        \]
        By noting that $A \ket{1} \ket{j} \ket{x} = \ket{0} \otimes \rbra{I - e^{-\frac 1 q} P} \ket{j}\ket{x}$, we have
        \begin{equation}
        \label{eq:I-eP}
        \begin{aligned}
            A \ket{1} \ket{j} \ket{x} =
            \begin{cases}
            \ket{0} \ket{j}\ket{x} - e^{-\frac 1 q} \ket{0} \ket{\rbra{j+1} \bmod 3q} \ket{\pi_{j+1}\rbra{x}}, & 0 \leq j < q, \\
            \ket{0}\ket{j}\ket{x} - e^{-\frac 1 q} \ket{0}\ket{\rbra{j+1} \bmod 3q} \ket{x}, & q \leq j < 2q, \\ 
            \ket{0}\ket{j}\ket{x} - e^{-\frac 1 q} \ket{0}\ket{\rbra{j+1} \bmod 3q} \ket{\pi^{-1}_{3q-j}\rbra{x}}, & 2q \leq j < 3q. 
        \end{cases}
        \end{aligned}
        \end{equation}
        Similarly, we have
        \begin{equation}
        \label{eq:I-ePinv}
        \begin{aligned}
            A \ket{0} \ket{j} \ket{x} = 
            \begin{cases}
            \ket{1} \ket{j}\ket{x} - e^{-\frac 1 q} \ket{1} \ket{\rbra{j+1} \bmod 3q} \ket{\pi_{j+1}^{-1}\rbra{x}}, & 0 \leq j < q, \\
            \ket{1}\ket{j}\ket{x} - e^{-\frac 1 q} \ket{1}\ket{\rbra{j+1} \bmod 3q} \ket{x}, & q \leq j < 2q, \\ 
            \ket{1}\ket{j}\ket{x} - e^{-\frac 1 q} \ket{1}\ket{\rbra{j+1} \bmod 3q} \ket{\pi_{3q-j}\rbra{x}}, & 2q \leq j < 3q. 
            \end{cases}
        \end{aligned}
        \end{equation}
        We can use \cref{eq:I-eP} and \cref{eq:I-ePinv} to find the non-zero entries of $A$, and thus we can construct each of the quantum oracles $\mathcal{O}_s$ and $\mathcal{O}_A$ for $A$ using only $1$ query to $\mathcal{O}_\pi$. 
    \end{proof}

    \subsection{Reducing permutation chain to QLSP}

    \label{sec:reduction}

    Let $\mathcal{A}$ be a quantum query algorithm with time complexity $\mathsf{T}\rbra{\mathcal{A}} = T\rbra{N, \kappa, \varepsilon} =  \poly\rbra{\log\rbra{N}, \kappa, 1/\varepsilon}$, and with quantum query-depth complexity $\mathsf{Q}_{\parallel}\rbra{\mathcal{A}} = Q_{\parallel}^{\textup{sparse}}\rbra{N, \kappa, \varepsilon}$. 
    In other words, the algorithm $\mathcal{A}$ uses $Q_{\parallel}^{\textup{sparse}}\rbra{N, \kappa, \varepsilon}$ $k$-parallel queries to $\mathcal{O}_s$ and $\mathcal{O}_A$, where $k \leq T\rbra{N, \kappa, \varepsilon}$.

    Let $\varepsilon \in \rbra{0, \frac{e^{-4}-e^{-6}}{1-e^{-6}}}$ be a constant, e.g., $\varepsilon = 0.001 = \Theta\rbra{1}$. 
    Now let $\delta > 0$ be a real number.
    Let $n = q \geq 1$ be a sufficiently large integer such that $\rbra{2+\delta}^{-1}q^{-1}I \leq A \leq I$, whose existence is guaranteed by \cref{lemma:basic-A}. 
    Let $N = 2^n$ and $k = T\rbra{6qN, \rbra{2+\delta}q, \varepsilon} = \poly\rbra{n}$. 
    Let $\pi_1, \pi_2, \dots, \pi_q$ be permutations chosen from $S_N$ uniformly at random. 
    By \cref{thm:permchain}, for every $1 \leq t \leq \rbra{q-1}/2$ and every quantum algorithm $\mathcal{A}$ that uses $t$ $k$-parallel queries to $\mathcal{O}_{\pi}$, we have
    \begin{equation}
    \label{eq:expected-succ-prob}
    \begin{aligned}
        \E_{\pi_1, \pi_2, \dots, \pi_q} \sbra*{ \Pr\sbra*{\mathcal{A} \text{ outputs } \Pi_q\rbra{0}} } = O\rbra*{q \sqrt{\frac k N}} \leq \frac{\poly\rbra{n}}{2^n} = \mathsf{negl}\rbra{n},
    \end{aligned}
    \end{equation}
    where $\mathsf{negl}\rbra{n}$ means a negligible function of $n$. 
    
    Using the quantum oracles $\mathcal{O}_s$ and $\mathcal{O}_A$ constructed by $\mathcal{O}_{\pi}$ in \cref{lemma:OsOabyOpi}, the algorithm $\mathcal{A}$ can produce a quantum state $\ket{\tilde \psi}$ such that 
    \[
    \Abs*{ \ket{\psi} - \ket{\tilde \psi} } \leq \varepsilon,
    \]
    using $t = Q_{\parallel}^{\textup{sparse}}\rbra{6qN, \rbra{2+\delta}q, \varepsilon}$ $k$-parallel queries to $\mathcal{O}_s$ and $\mathcal{O}_A$. 
    Then, by \cref{lemma:measure-psi}, with probability at least $\frac{e^{-4}-e^{-6}}{1-e^{-6}} - \varepsilon = \Omega\rbra{1}$, we can find $\Pi_q\rbra{0}$ by measuring $\ket{\tilde \psi}$ on the computational basis, i.e.,
    \begin{equation}
    \label{eq:succ-prob}
    \Pr\sbra*{\mathcal{A} \text{ outputs } \Pi_q\rbra{0}} = \Omega\rbra{1}.
    \end{equation}
    Now if $t \leq \rbra{q-1}/2$, then $\mathcal{A}$ can be considered as a quantum query algorithm that makes $t$ $k$-parallel queries to $\mathcal{O}_\pi$. Then, by \cref{eq:expected-succ-prob}, we have
    \[
    \E_{\pi_1, \pi_2, \dots, \pi_q} \sbra*{ \Pr\sbra*{\mathcal{A} \text{ outputs } \Pi_q\rbra{0}} } = \mathsf{negl}\rbra{n},
    \]
    which leads to a contradiction since \cref{eq:succ-prob} does not depend on the choices of $\pi_1, \pi_2, \dots, \pi_q$.
    Therefore, it must be the case that $t > \rbra{q-1}/2$, and thus $t \geq q/2$ (note that $t$ and $q$ are positive integers).
    This means that 
    \[
    Q_{\parallel}^{\textup{sparse}}\rbra{6qN, \rbra{2+\delta}q, \varepsilon} \geq \frac q 2.
    \]
    By letting $\kappa = \rbra{2+\delta}n$, we have 
    \[
    Q_{\parallel}^{\textup{sparse}}\rbra*{\frac{6}{2+\delta} \kappa 2^{\frac{1}{2+\delta} \kappa}, \kappa, \varepsilon} \geq \frac{1}{2\rbra{2+\delta}}\kappa,
    \]
    which is
    \[
    Q_{\parallel}^{\textup{sparse}}\rbra{3 \kappa 2^{\frac{\kappa}{2}}, \kappa, \varepsilon} \geq \frac{1}{4+2\delta} \kappa.
    \]
    Because of the arbitrariness of $\delta$, these yield the proof of \cref{thm:main}. 

    \begin{remark} \label{rmk:kappa}
        The choice of $N$ can be loosened to $N = 2^{n^c}$ for any constant $c > 0$, where the above proof chooses $c = 1$ for simplicity. 
        With such choice of $N$, 
        \cref{thm:main} can be strengthened to 
        \[
        \limsup_{\kappa \to +\infty} \frac{Q_{\parallel}^{\textup{sparse}}\rbra{3\kappa 2^{\rbra{\frac{\kappa}{2}}^c}, \kappa, \varepsilon}}{\kappa} \geq \frac 1 4.
        \]
    \end{remark}

    \section{Quantum Depth Lower Bound for Block-QLSP}
    
    \label{sec:block-QLSP}

    In this section, we extend the proof of \cref{thm:main} to the case of block-QLSP. 
    The idea is to solve the specific instance of sparse-QLSP in the proof of \cref{thm:main} by quantum algorithms for block-QLSP.
    The difference is that due to imperfect implementations of the block-encoding of sparse matrices (caused by \cref{lemma:sparse-to-block}), we have to analyze the perturbation of linear systems (see \cref{lemma:perturb-QLSP}) in order to ensure the success probability of the quantum algorithms.

    The quantum depth lower bound for block-QLSP is formally stated as follows. 

    \begin{theorem}
    \label{thm:block-QLSP}
        For every constant $0 < \varepsilon < \frac{e^{-4}-e^{-6}}{1-e^{-6}} \approx 0.015$, we have
        \[
        \limsup_{\kappa \to +\infty} \frac{Q_{\parallel}^{\textup{block}}\rbra{3\kappa 2^{\frac{\kappa}{4}-1}, \kappa, \varepsilon}}{\kappa} \geq \frac 1 {32}.
        \]
    \end{theorem}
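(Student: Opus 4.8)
The plan is to reduce block-QLSP to the specific instance of sparse-QLSP constructed in the proof of \cref{thm:main}, so that any low-depth block-QLSP solver yields a low-depth parallel quantum algorithm for \textup{PermChain}, which is ruled out by \cref{thm:permchain}. Concretely, let $A$ be the $2$-sparse $6qN \times 6qN$ Hermitian matrix of \cref{eq:def-A} encoding a random permutation chain. The first step is to build a block-encoding of (a scaled copy of) $A$ from the sparse oracles $\mathcal{O}_s,\mathcal{O}_A$ using \cref{lemma:sparse-to-block}: since $A$ is $2$-sparse with $\abs{A_{j,k}} \leq 1$, we obtain a $\rbra{2,\ceil{\log_2\rbra{6qN}}+3,\varepsilon'}$-block-encoding $U_A$ of $A$ using $2$ queries to $\mathcal{O}_s$ and $2$ queries to $\mathcal{O}_A$, hence $4$ queries to $\mathcal{O}_\pi$ in total by \cref{lemma:OsOabyOpi}. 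Equivalently $U_A$ is a $\rbra{1,\cdot,\varepsilon'/2}$-block-encoding of the rescaled matrix $A/2$, which satisfies $I/\rbra{2\kappa} \leq A/2 \leq I/2 \leq I$; so feeding $U_A$ to a block-QLSP solver with condition-number parameter $2\kappa = \rbra{4+2\delta}q$ is the natural setup.

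The second step is the perturbation analysis. A block-QLSP solver is promised to work for an \emph{exact} block-encoding, but $U_A$ only encodes $A/2$ up to operator-norm error $\varepsilon'/2$; write the encoded matrix as $\tilde A = A/2 + E$ with $\Abs{E} \leq \varepsilon'/2$. I would invoke the perturbation lemma for linear systems (referenced as \cref{lemma:perturb-QLSP} in the paper) to bound $\Abs{\ket{\tilde x} - \ket{x}}$, where $\ket{\tilde x} \propto \tilde A^{-1}\ket{\bar 0}$ and $\ket{x} = \ket{\psi} \propto A^{-1}\ket{\bar 0}$; the standard bound gives something like $\Abs{\ket{\tilde x}-\ket{x}} = O\rbra{\kappa \Abs{E}}$, so choosing $\varepsilon' = \Theta\rbra{\varepsilon/\kappa} = \mathsf{poly}\rbra{1/n}$ (harmless for the time bound, since \cref{lemma:sparse-to-block} costs only $O\rbra{\log N + \log^{2.5}\rbra{s/\varepsilon'}} = \poly\rbra{n}$ gates) makes the total state error at most, say, $2\varepsilon$. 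Then by \cref{lemma:measure-psi}, measuring the output state still yields $\Pi_q\rbra{0}$ with probability at least $\frac{e^{-4}-e^{-6}}{1-e^{-6}} - 2\varepsilon = \Omega\rbra{1}$, provided $\varepsilon$ is below the stated threshold — this is exactly where the more restrictive constant range (and the need for the extra factor of $2$ in error) comes in compared to \cref{thm:main}.

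The third step is bookkeeping the depth. If the block-QLSP solver uses $t$ $k$-parallel queries to $U_A$, then expanding each $U_A$ into its sparse-oracle implementation and then into $\mathcal{O}_\pi$ multiplies the query \emph{depth} by the constant $4$ (four sequential $\mathcal{O}_\pi$ queries per $U_A$), giving a parallel algorithm for \textup{PermChain} with $4t$ $k'$-parallel queries to $\mathcal{O}_\pi$ for some $k' = \poly\rbra{n}$. The argument of \cref{sec:reduction} then forces $4t \geq q/2$, i.e. $t \geq q/8$. Substituting $\kappa = 2\rbra{2+\delta}n$ (so $N = 2^n = 2^{\kappa/\rbra{4+2\delta}}$, and $6qN = 6n\,2^n$ rewrites as $3\kappa 2^{\kappa/4 - 1}$ in the $\delta \to 0$ limit) yields $Q_{\parallel}^{\textup{block}}\rbra{3\kappa 2^{\kappa/4-1},\kappa,\varepsilon} \geq \frac{1}{16+8\delta}\kappa$, and letting $\delta \to 0$ gives the claimed $\frac{1}{32}$ after passing to $\limsup$. (There is a small subtlety in which power of $2$ appears: the factor $6qN$ becomes $6n2^n = 3\cdot(2n)\cdot 2^{n}$, and with $\kappa \approx 4n$ this is $3\kappa \cdot 2^{\kappa/4 - 1}$; I would track the $\delta$-dependence carefully here as in the end of \cref{sec:reduction}.) The main obstacle is the perturbation step: one must make sure the linear-system perturbation bound is applied with the correct dependence on $\kappa$ and on the choice of right-hand side $\ket{\bar 0}$, and that the resulting precision demand $\varepsilon' = \Theta(\varepsilon/\kappa)$ does not secretly blow up the gate count beyond $\poly\rbra{\log N, \kappa, 1/\varepsilon}$ — it does not, since $\log^{2.5}\rbra{s/\varepsilon'} = \polylog\rbra{n}$, but this needs to be checked explicitly.
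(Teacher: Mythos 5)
Your proposal follows essentially the same route as the paper's proof: block-encode $A$ via \cref{lemma:sparse-to-block} (so each $U_A$ costs $4$ queries to $\mathcal{O}_\pi$ by \cref{lemma:OsOabyOpi}), absorb the factor-$2$ subnormalization into a doubled condition number, control the $\varepsilon'$-imprecision of the block-encoding with \cref{lemma:perturb-QLSP}, and rerun the reduction of \cref{sec:reduction} to force $4t \geq q/2$ and hence the constant $\frac{1}{32}$. The one adjustment needed is in the perturbation step, which you yourself flag: \cref{lemma:perturb-QLSP} gives a state error of order $\kappa^3 \varepsilon'$ rather than $\kappa \varepsilon'$, so you must take $\varepsilon' = O\rbra{\varepsilon/\kappa^3}$ (the paper uses $\varepsilon' = 1/\rbra{2\kappa^5}$), which, as you correctly observe, remains harmless for the $\poly\rbra{\log\rbra{N},\kappa,1/\varepsilon}$ time bound.
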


    \begin{proof}
        In the proof of \cref{thm:main}, for every $\delta > 0$, we construct, for every sufficiently large $n \geq 1$, a $2$-sparse Hermitian matrix $A \in \mathbb{C}^{6qN \times 6qN}$ with $q = n$ and $N = 2^n$ such that $I/\kappa \leq A \leq I$, where $\kappa = \rbra{2+\delta}q$. 

        Now suppose that we are given quantum oracles $\mathcal{O}_s$ and $\mathcal{O}_A$ for sparse matrix $A$. 
        We will reduce sparse-QLSP to block-QLSP by constructing the quantum oracle $U_A$ that is an approximate block-encoding of $A$ using queries to $\mathcal{O}_s$ and $\mathcal{O}_A$. 
        Let $0 < \varepsilon' < 1/\kappa$ to be determined later. 
        By \cref{lemma:sparse-to-block}, we can implement a unitary operator $U_A$ such that $U_A$ is a $\rbra{2, a, \varepsilon'}$-block-encoding of $A$, using $2$ queries to $\mathcal{O}_s$, $2$ queries to $\mathcal{O}_A$, and $O\rbra{\log\rbra{N}+\log^{2.5}\rbra{1/\varepsilon'}}$ one- and two-qubit quantum gates, where $a = \ceil{\log_2\rbra{N}}+3$.
        In other words, $U_A$ is a $\rbra{1, a, 0}$-block-encoding of $A'$ such that $\Abs{2A' - A} \leq \varepsilon'$.
        It can be shown that $I/\kappa' \leq A' \leq I$, where $\kappa' = 2\kappa/\rbra{1-2\kappa\varepsilon'}$.
        Considering the perturbation $\Abs{A/2-A'} \leq \varepsilon'/2$ of linear systems (see \cref{lemma:perturb-QLSP}), we have
        \begin{align*}
            \Abs*{ \frac{A^{-1}\ket{0}}{\Abs{A^{-1}\ket{0}}} - \frac{A'^{-1}\ket{0}}{\Abs{A'^{-1}\ket{0}}} } 
            & \leq \frac{\rbra{2\kappa}^2 \cdot \rbra{2\kappa+1} \cdot \frac{\varepsilon'}{2}}{1 - \rbra{2\kappa} \cdot \frac{\varepsilon'}{2}} \\
            & = \frac{2\kappa^2\rbra{2\kappa+1}\varepsilon'}{1 - \kappa\varepsilon'}.
        \end{align*}
        For $\kappa \geq 4$, we choose $\varepsilon' = 1/2\kappa^5 = 1/\poly\rbra{\kappa} = 1/\poly\rbra{n}$, then it can be shown that 
        \begin{equation}
        \label{eq:perturb-Aprime}
        \Abs*{ \frac{A^{-1}\ket{0}}{\Abs{A^{-1}\ket{0}}} - \frac{A'^{-1}\ket{0}}{\Abs{A'^{-1}\ket{0}}} } \leq \frac{2\kappa^2\rbra{2\kappa+1}\varepsilon'}{1 - \kappa\varepsilon'} < \frac{1}{\kappa}.
        \end{equation}

        Suppose that there is a quantum query algorithm $\mathcal{A}$ for $\textup{QLSP}\rbra{N, \kappa, \varepsilon}$ in the block-encoded input model such that it has quantum query-depth complexity $\mathsf{Q}_{\parallel}\rbra{\mathcal{A}} = Q_{\parallel}^{\textup{block}}\rbra{N, \kappa, \varepsilon}$ and quantum time complexity $\mathsf{T}\rbra{\mathcal{A}} = T\rbra{N, \kappa, \varepsilon} =  \poly\rbra{\log\rbra{N}, \kappa, 1/\varepsilon}$.
        Now for every constant $\varepsilon \in \rbra{0, \frac{e^{-4}-e^{-6}}{1-e^{-6}}}$, applying $\mathcal{A}$ with quantum oracle $U_A$ to solve $\textup{QLSP}\rbra{6qN, \kappa', \varepsilon}$ for $A'$, we obtain a quantum state $\ket{\tilde \psi}$ (with high probability) such that
        \begin{equation}
        \label{eq:err-QLSP}
        \Abs*{\ket{\tilde \psi} - \frac{A'^{-1}\ket{0}}{\Abs{A'^{-1}\ket{0}}}} \leq \varepsilon,
        \end{equation}
        using $Q_{\parallel}^{\textup{block}}\rbra{6qN, \kappa', \varepsilon}$ queries to $U_A$ and the number of one- and two-qubit gates is $T\rbra{6qN, \kappa', \varepsilon} + Q_{\parallel}^{\textup{block}}\rbra{6qN, \kappa', \varepsilon} \cdot O\rbra{\log\rbra{N} + \log^{2.5}\rbra{1/\varepsilon'}} = \poly\rbra{\log\rbra{6qN}, \kappa', 1/\varepsilon} = \poly\rbra{n}$. 

        By \cref{eq:perturb-Aprime} and \cref{eq:err-QLSP}, we have that for $\kappa \geq 4$, 
        \[
        \Abs*{\ket{\tilde \psi} - \frac{A^{-1}\ket{0}}{\Abs{A^{-1}\ket{0}}}} \leq \frac{1}{\kappa} + \varepsilon.
        \]
        Now we consider the permutation chain problem considered in \cref{sec:reduction}.
        By \cref{lemma:measure-psi}, with probability at least $\frac{e^{-4}-e^{-6}}{1-e^{-6}} - \varepsilon - 1/\kappa = \Omega\rbra{1}$ (for sufficiently large $\kappa$), we can find $\Pi_q\rbra{0}$ by measuring $\ket{\tilde \psi}$ on the computational basis. 
        Here, we note that $U_A$ can be implemented by $2$ queries to $\mathcal{O}_s$ and $\mathcal{O}_A$; by \cref{lemma:OsOabyOpi}, $U_A$ can be then implemented by $4$ queries to $\mathcal{O}_\pi$.
        Therefore, we can obtain $\ket{\tilde \psi}$ using $4Q_{\parallel}^{\textup{block}}\rbra{6qN, \kappa', \varepsilon}$ parallel queries to $\mathcal{O}_\pi$. 
        An argument similar to \cref{sec:reduction} will also show that this is impossible unless 
        \[
        4Q_{\parallel}^{\textup{block}}\rbra{6qN, \kappa', \varepsilon} \geq \frac q 2,
        \]
        where
        \[
        \kappa' = \frac{2\kappa}{1-2\kappa\varepsilon'} = \frac{2\kappa}{1-\kappa^{-4}}.
        \]
        By letting $\kappa = \rbra{2+\delta}n$, we have
        \[
        Q_{\parallel}^{\textup{block}}\rbra*{6 \cdot \frac{\kappa}{2+\delta} \cdot 2^{\frac{\kappa}{2+\delta}}, \frac{2\kappa}{1-\kappa^{-4}}, \varepsilon} \geq \frac{\kappa}{8\rbra{2+\delta}},
        \]
        which implies
        \[
        Q_{\parallel}^{\textup{block}}\rbra*{3\kappa 2^{\frac{\kappa}{4}-1}, \frac{\kappa}{1-16\kappa^{-4}}, \varepsilon} \geq \frac{\kappa}{16\rbra{2+\delta}}.
        \]
        Because $\delta$ can be chosen arbitrarily large and the term $1-16\kappa^{-4}$ tends to $1$ when $\kappa$ is large, these yield the proof. 
    \end{proof}

    \section*{Acknowledgment}

    The authors would like to thank Robin Kothari for communication regarding the related work \cite{HK21}.
    They also thank Fran{\c{c}}ois Le Gall for helpful comments on an early version of this paper.

    Qisheng Wang was supported by the MEXT Quantum Leap Flagship Program (MEXT Q-LEAP) grants No.~JPMXS0120319794. Zhicheng Zhang was supported by the Sydney Quantum Academy, NSW, Australia.

    \addcontentsline{toc}{section}{References}
    
    \bibliographystyle{unsrturl}
    \bibliography{main}

    \appendix

    \section{Perturbations of Linear Systems}

    \begin{lemma}
    \label{lemma:perturb-QLSP}
        Suppose that $A \in \mathbb{C}^{N \times N}$ is an Hermitian matrix such that $I/\kappa \leq A \leq I$ for some $\kappa \geq 1$. 
        Let $B \in \mathbb{C}^{N \times N}$ be an Hermitian matrix such that $B \leq I$ and $\Abs{A - B} \leq \varepsilon$, where $0 < \varepsilon < 1/\kappa$. 
        Then, 
        \[
        \Abs{A^{-1} - B^{-1}} \leq \frac{\kappa^2\varepsilon}{1-\kappa\varepsilon}. 
        \]
        Moreover, for every unit vector $\ket{x} \in \mathbb{C}^{N}$, i.e., $\Abs{\ket{x}} = 1$, 
        \[
        \Abs*{ \frac{A^{-1}\ket{x}}{\Abs{A^{-1}\ket{x}}} - \frac{B^{-1}\ket{x}}{\Abs{B^{-1}\ket{x}}} } \leq \frac{\kappa^2 \rbra{\kappa + 1}\varepsilon}{1 - \kappa\varepsilon}.
        \]
    \end{lemma}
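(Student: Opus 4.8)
The plan is to prove the two inequalities in sequence, deriving the normalized-vector estimate from the operator-norm estimate. First I would verify that $B$ is invertible, which is the only place the hypothesis $\varepsilon < 1/\kappa$ is genuinely needed. Since $A - B$ is Hermitian with $\Abs{A - B} \leq \varepsilon$, we have $B = A - \rbra{A - B} \geq \frac{1}{\kappa} I - \varepsilon I = \rbra*{\frac{1}{\kappa} - \varepsilon} I > 0$, so $B$ is positive definite with $\Abs{B^{-1}} \leq \rbra*{\frac{1}{\kappa} - \varepsilon}^{-1} = \frac{\kappa}{1 - \kappa\varepsilon}$. Also $\Abs{A^{-1}} \leq \kappa$ from $A \geq I/\kappa$, and $\Abs{A} \leq 1$ from $I/\kappa \leq A \leq I$.

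For the first inequality I would use the (second) resolvent identity $A^{-1} - B^{-1} = A^{-1}\rbra{B - A}B^{-1}$ together with submultiplicativity of the operator norm:
\[
\Abs{A^{-1} - B^{-1}} \leq \Abs{A^{-1}} \cdot \Abs{A - B} \cdot \Abs{B^{-1}} \leq \kappa \cdot \varepsilon \cdot \frac{\kappa}{1 - \kappa\varepsilon} = \frac{\kappa^2 \varepsilon}{1 - \kappa\varepsilon}.
\]

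For the second inequality, write $u = A^{-1}\ket{x}$ and $v = B^{-1}\ket{x}$; both are nonzero since $A$ and $B$ are invertible and $\ket{x}$ is a unit vector. The key observation is that $\Abs{u} = \Abs{A^{-1}\ket{x}} \geq \Abs{\ket{x}}/\Abs{A} \geq 1$, using $\Abs{A} \leq 1$. Starting from $\frac{u}{\Abs{u}} - \frac{v}{\Abs{v}} = \frac{u - v}{\Abs{u}} + v\rbra*{\frac{1}{\Abs{u}} - \frac{1}{\Abs{v}}}$, the triangle inequality and the reverse triangle inequality $\abs{\Abs{u} - \Abs{v}} \leq \Abs{u - v}$ give
\[
\Abs*{\frac{u}{\Abs{u}} - \frac{v}{\Abs{v}}} \leq \frac{\Abs{u - v}}{\Abs{u}} + \frac{\abs{\Abs{u} - \Abs{v}}}{\Abs{u}} \leq 2\Abs{u - v}.
\]
Finally, $\Abs{u - v} = \Abs{\rbra{A^{-1} - B^{-1}}\ket{x}} \leq \Abs{A^{-1} - B^{-1}} \leq \frac{\kappa^2 \varepsilon}{1 - \kappa\varepsilon}$, so the left-hand side is at most $\frac{2\kappa^2 \varepsilon}{1 - \kappa\varepsilon} \leq \frac{\kappa^2\rbra{\kappa + 1}\varepsilon}{1 - \kappa\varepsilon}$ since $\kappa \geq 1$, which is the claimed bound (in fact slightly stronger).

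I do not anticipate a real obstacle here; the only points requiring care are that invertibility of $B$ and the bound $\Abs{B^{-1}} \leq \kappa/\rbra{1 - \kappa\varepsilon}$ must be established from $\varepsilon < 1/\kappa$ before anything else, and that the normalization inequality should be arranged with $\Abs{u}$ in the denominator so that $\Abs{u} \geq 1$ can be invoked — symmetrically one could use $\Abs{v} \geq \Abs{\ket{x}}/\Abs{B} \geq 1$ (as $\Abs{B} \leq 1$), which works equally well.
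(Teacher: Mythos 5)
Your proof is correct and follows essentially the same route as the paper: the resolvent identity $A^{-1}-B^{-1}=A^{-1}\rbra{B-A}B^{-1}$ with $\Abs{A^{-1}}\leq\kappa$ and $\Abs{B^{-1}}\leq\kappa/\rbra{1-\kappa\varepsilon}$ for the first bound, and an add-and-subtract decomposition of the normalized difference for the second. The only difference is in how you arrange the cross term: by keeping $\Abs{v}\,\abs{1/\Abs{u}-1/\Abs{v}}=\abs{\Abs{v}-\Abs{u}}/\Abs{u}\leq\Abs{u-v}$ (rather than bounding $\abs{1/\Abs{u}-1/\Abs{v}}$ alone and multiplying by $\Abs{u}\leq\kappa$ as the paper does), you obtain the constant $2$ in place of $\kappa+1$, i.e.\ the slightly stronger bound $2\kappa^2\varepsilon/\rbra{1-\kappa\varepsilon}$, which of course implies the stated one since $\kappa\geq 1$.
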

    \begin{proof}
        Since $I/\kappa \leq A \leq I$, we have $I \leq A^{-1} \leq \kappa I$ and thus $\Abs{A^{-1}} \leq \kappa$.
        Also, since $\Abs{A-B} \leq \varepsilon$, we have $\rbra{1/\kappa - \varepsilon} I \leq B$ and thus $B^{-1} \leq \kappa I/\rbra{1-\kappa\varepsilon}$, therefore $\Abs{B^{-1}} \leq \kappa/\rbra{1-\kappa\varepsilon}$.
        \begin{align*}
            \Abs*{A^{-1} - B^{-1}}
            & = \Abs*{ A^{-1} \rbra*{ I - AB^{-1} } } \\
            & \leq \Abs*{A^{-1}} \Abs*{ I - AB^{-1} } \\
            & = \Abs*{A^{-1}} \Abs*{ BB^{-1} - AB^{-1} } \\
            & \leq \Abs*{A^{-1}} \Abs*{B - A} \Abs*{B^{-1}} \\
            & \leq \kappa \cdot \varepsilon \cdot \frac{\kappa}{1-\kappa\varepsilon} \\
            & = \frac{\kappa^2\varepsilon}{1-\kappa\varepsilon}.
        \end{align*}

        For every unit vector $\ket{x}$, we have the following properties:
        \[
        1 \leq \Abs*{A^{-1} \ket{x}} \leq \kappa,
        \]
        \[
        1 \leq \Abs*{B^{-1} \ket{x}} \leq \frac{\kappa}{1-\kappa\varepsilon},
        \]
        \[
        \Abs*{ A^{-1} \ket{x} - B^{-1} \ket{x} } \leq \Abs*{ A^{-1} - B^{-1} } \leq \frac{\kappa^2\varepsilon}{1-\kappa\varepsilon}.
        \]
        Therefore, we further have
        \begin{align*}
            & \Abs*{ \frac{A^{-1}\ket{x}}{\Abs{A^{-1}\ket{x}}} - \frac{B^{-1}\ket{x}}{\Abs{B^{-1}\ket{x}}} } \\
            & \leq \Abs*{ \frac{A^{-1}\ket{x}}{\Abs{A^{-1}\ket{x}}} - \frac{A^{-1}\ket{x}}{\Abs{B^{-1}\ket{x}}} } \\
            & \qquad + \Abs*{ \frac{A^{-1}\ket{x}}{\Abs{B^{-1}\ket{x}}} - \frac{B^{-1}\ket{x}}{\Abs{B^{-1}\ket{x}}} } \\
            & \leq \abs*{ \frac{1}{\Abs{A^{-1}\ket{x}}} - \frac{1}{\Abs{B^{-1}\ket{x}}} } \Abs*{A^{-1}\ket{x}} \\
            & \qquad + \frac{1}{\Abs{B^{-1}\ket{x}}} \Abs*{ A^{-1}\ket{x} - B^{-1}\ket{x} } \\
            & \leq \Abs*{ A^{-1}\ket{x} - B^{-1}\ket{x} } \rbra*{ \Abs*{A^{-1}\ket{x}} + \frac 1 {\Abs{B^{-1}\ket{x}}} } \\
            & \leq \frac{\kappa^2\varepsilon}{1-\kappa\varepsilon} \cdot \rbra*{ \kappa + 1 } \\
            & \leq \frac{\kappa^2 \rbra{\kappa + 1}\varepsilon}{1 - \kappa\varepsilon}.
        \end{align*}
    \end{proof}

\end{document}